\newcommand{\valos}{\mathbb{R}}
\newcommand{\complex}{\mathbb{C}}
\newcommand{\eps}{\varepsilon}
\newcommand{\ordo}{\mathcal{O}}
\newtheorem{thm}{Theorem}
\newcommand{\bt}{\bar\theta}
\newcommand{\ket}[1]{{\left|#1\right\rangle}}
\newcommand{\bra}[1]{{\left\langle #1\right|}}
\newcommand{\vev}[1]{\left\langle #1 \right\rangle}
\begin{document}

\numberwithin{equation}{section}

\title{Form factor approach to diagonal finite volume matrix elements in Integrable QFT}
\author{Bal\'azs Pozsgay$^1$\\
~\\
 $^{1}$MTA-BME \textquotedbl{}Momentum\textquotedbl{} Statistical
Field Theory Research Group\\
1111 Budapest, Budafoki \'ut 8, Hungary
}

\maketitle

{\abstract{
We derive an exact formula for finite volume excited state mean values of local
operators in 1+1 dimensional Integrable QFT with diagonal scattering. Our result is a
non-trivial generalization of the LeClair-Mussardo series, which is a form factor
expansion for finite size ground state mean values. 
}}

\section{Introduction}

The study of finite volume effects is central to many areas of
theoretical physics including Quantum Field Theory (QFT) and Statistical
Physics. Understanding the volume dependence of physical observables
leads to efficient ways of extracting infinite
volume quantities, for example in 3+1 dimensional lattice QFT calculations. 
As a first step, one would like to understand how finite volume
effects influence the spectrum \cite{Luscher-Fmu1,Luscher-Fmu2}. As a
second step one can also consider composite objects like correlation
functions or matrix elements of local operators. An example is given
by transition matrix elements in lattice QFT, which were shown to
possess a non-trivial volume dependence \cite{Lellouch-Luscher}. 

Finite size effects are actively investigated in 1+1 dimensional
Integrable QFT, where the integrability of the models allows for an
exact determination of physical quantities. For example the exact
ground state energy (the Casimir-energy) is known in terms of
solutions of certain non-linear integral equations
\cite{zam-tba,klassen_melzer_tba1,DdV1,DdV2,blz-ddv}. Excited
state energies are also known in many cases
 \cite{DoreyTateoAnCont1,DoreyTateoAnCont2,ddv-exc,takacs-sine-Gordon-NLIE-exc,takacs-sineG-NLIE-odd,blz-exc}.
The techniques developed to study the finite size spectrum of 1+1 scattering theories
found applications and generalizations in the framework of the
AdS/CFT correspondence as well \cite{ads-cft-review}. 

In a Lorentz-invariant 1+1 dimensional field theory 
finite size effects are equivalent to finite temperature effects, as
can be seen by
choosing the compact dimension to lie in the  imaginary time
direction. 
Finite temperature correlation functions are relevant to real world
condensed matter experiments because 
the low-energy physics of certain materials leads to effective field theories which are
integrable in many cases \cite{essler-konik-review}. This motivated the study
of finite temperature
correlations in integrable QFT 
\cite{Leclair:1996bf,leclair_mussardo,Doyon:finiteTreview,fftcsa2,LM-sajat,Essler:2009zz}. 
In models with diagonal scattering the finite temperature one-point
functions are given  
 by the LeClair-Mussardo series
\cite{leclair_mussardo}. Even though this series has not yet been
proven from first principles, it is supported by strong
theoretical arguments \cite{Saleur:1999hq,fftcsa2,LM-sajat} and
numerical checks \cite{takacs-lm}. On the other hand, an
analogous result for the finite temperature two-point function is
still missing. The corresponding formula of \cite{leclair_mussardo}
is
ill-defined at higher orders and it was criticized in \cite{Saleur:1999hq,CastroAlvaredo:2002ud}.
The leading terms of a well-defined finite temperature expansion were derived in the
works \cite{Essler:2009zz,D22,Pozsgay:2009pv,takacs-szecsenyi-2p}, but the general pattern of
the higher correction terms is not clear yet.

One way of deriving finite temperature correlations is
through a finite volume regularization
\cite{fftcsa1,fftcsa2,LM-sajat,D22,Essler:2009zz}. In this approach it
is 
essential to know the volume dependence of the 
finite volume matrix elements of local operators. This problem was
solved in \cite{fftcsa1,fftcsa2} to all orders in $1/L$, where it was
shown that the finite volume form factors essentially coincide with
the corresponding infinite volume form factors, normalized by the
appropriate density of states in rapidity space. Exponential
corrections to the asymptotic results of \cite{fftcsa1,fftcsa2} were
considered in \cite{Pozsgay:2008bf}, where the so-called $\mu$-term of
the form factors 
was derived using finite volume bound state quantization (see also \cite{takacs-dangerous-mu}). 

In the present work we continue the line of research initiated in the
works \cite{fftcsa1,fftcsa2,Pozsgay:2008bf}. We consider a subset of
the finite volume matrix elements: excited state mean values. Even
though these objects are not of direct relevance to finite temperature
correlation functions, understanding the structure of the
higher exponential corrections in this special case might help to
derive results also for the off-diagonal matrix elements, which can
lead to a new way of obtaining finite temperature corrections to the
two-point function.

The structure of this paper is as follows. In Section 2 we collect the
basic results about finite volume QFT which will be used 
in the subsequent sections. In Section 3 we consider excited states in
finite volume and we formulate our basic conjecture for the excited
state mean values, which takes the form of a multiple integral series
where each integral runs over a non-trivial contour in the complex
plain. 
Sections 4 and 5 include calculations
needed to transform this result into a form where each
integral runs over the real axis only.
In Sections 6 and 7 we calculate our final formulas for the
one-particle and two-particle mean values. A general conjecture for
higher particle number is given in Section 8. Finally, Section 9
includes our conclusions. The reader who is not interested in the
 intermediate steps towards our main result in
Section 8 (equation \eqref{full-multiparticle-result}) may
skip sections 4,5,6 and 7. 

\section{Integrable QFT in finite volume -- basic ingredients}

Consider a massive Integrable Quantum Field theory in finite volume $L$
with periodic boundary conditions. The discrete spectrum of the Hamiltonian will
be denoted by $\ket{n}_L$ with the $n=0$ state being the vacuum. Energy
levels are denoted by $E_n(L)$. We assume that the energy density of
the vacuum is normalized to zero such that $E_0(L)$ is the
Casimir-energy which decays exponentially with the
volume. In the present work we limit
ourselves to theories with diagonal scattering. Moreover, for the sake
of simplicity we will only consider models with only one particle
species. 
The mass of the
single particle will be denoted by $m$.

Let $S(\theta)=e^{i\delta(\theta)}$ be the scattering matrix of the theory (a pure phase
in this case). It satisfies the relations
\begin{equation}
\label{S}
  S(\theta)S(-\theta)=1 \qquad\qquad S(\theta)=S(i\pi-\theta).
\end{equation}
For future use we define the derivative of the phase shift:
\begin{equation*}
  \varphi(\theta)=\frac{d}{d\theta} \left(-i \log S(\theta)\right).
\end{equation*}
It follows from \eqref{S} that
\begin{equation}
  \label{phieq}
\varphi(\theta)=\varphi(-\theta)=\varphi(i\pi+\theta).
\end{equation}

The finite volume ground state energy can be calculated by the
Thermodynamic Bethe Ansatz \cite{zam-tba}. It is given by
\begin{equation*}
  E_0(L)=-m\int \frac{d\theta}{2\pi} \cosh(\theta) \log(1+e^{-\eps_0(\theta)}),
\end{equation*}
where $\eps_0(\theta)$, the so-called pseudoenergy function is given
by the solution of the non-linear integral equation
\begin{equation}
\label{TBA}
   \eps_0(\theta)=e(\theta)L-
  \int_{-\infty}^\infty\frac{d\theta'}{2\pi} 
\varphi(\theta-\theta') \log(1+e^{-\eps_0(\theta')}),
\end{equation}
with $e(\theta)=m\cosh\theta$ being the one-particle energy and $L$ is
the volume.

In this work we are concerned with the finite volume mean values
\begin{equation}
\label{ezt}
 _L\bra{n}\ordo(x)\ket{n}_L, 
\end{equation}
where $\ket{n}_L$ is an exact eigenstate of the finite volume Hamiltonian
normalized to unity and
$\ordo(x)$ is a local operator of the theory, defined with the same
normalization as in infinite volume. Translation invariance implies
that the mean value does not depend on $x$, therefore this variable
will be omitted in the following. For the simplicity we only consider
scalar operators.

Our main goal is to derive a form factor expansion for the objects \eqref{ezt}.
 The (infinite volume) form factors are defined as the
matrix elements of the local operator on infinite volume multiparticle
states:
\begin{equation*}
  F_{nm}^\ordo(\theta_1',\dots,\theta_n'|\theta_1,\dots,\theta_m)=
\bra{\theta_1',\dots,\theta_n'}\ordo\ket{\theta_1,\dots,\theta_m}.
\end{equation*}
The multiparticle states are equal to the \textit{in} or \textit{out}
scattering states for a given ordering of the rapidities:
\begin{equation*}
  \ket{\theta_1,\dots,\theta_m}=
\begin{cases}
|\theta_{1},\dots,\theta_{m}\rangle^{in} & :\;\theta_{1}>\theta_{2}>\dots>\theta_{m}\\
|\theta_{1},\dots,\theta_{m}\rangle^{out} & :\;\theta_{1}<\theta_{2}<\dots<\theta_{m}.\end{cases}
\end{equation*}
All form factors can be expressed with the elementary form factors
using the crossing relation:
\begin{equation}
\label{crossing}
  F_{nm}^\ordo(\theta_1',\dots,\theta_n'|\theta_1,\dots,\theta_m)=
  F_{0,n+m}^\ordo(\theta_1'+i\pi,\dots,\theta_n'+i\pi,\theta_1,\dots,\theta_m).\qquad
\end{equation}
The above equation is valid whenever there are no coinciding
rapidities, otherwise there are also contact terms present. 

In many cases the form factors have been constructed explicitly using
the so-called form factor bootstrap program. 
The idea of this program is to construct all
form factor functions which satisfy a certain set of equations (also
called the ``form factor axioms'') and possess certain analyticity properties,
and to identify the solutions describing the actual form factors of a
given operator. Here we do not review this procedure but instead refer
the reader to \cite{smirnov_ff,Babujian:2006km}. We assume that the form factors of the local
operator in question are known (or can be calculated in
principle). In the calculations presented below 
we use only two of the general properties, namely
 the exchange axiom and the kinematical pole property
satisfied by the elementary form factors:
\begin{equation}
\label{exchange}
  F_n^\ordo(\theta_1,\dots,\theta_j,\theta_{j+1},\dots,\theta_n)=
S(\theta_j-\theta_{j+1})  F_n^\ordo(\theta_1,\dots,\theta_{j+1},\theta_{j},\dots,\theta_n)
\end{equation}
\begin{equation}
\label{kinpole}
-i\mathop{\textrm{Res}}_{\theta=\theta^{'}}F_{n+2}^{\mathcal{O}}(\theta+i\pi,\theta^{'},\theta_{1},\dots,\theta_{n})
=\left(1-\prod_{k=1}^{n}S(\theta'-\theta_{k})\right)F_{n}^{\mathcal{O}}(\theta_{1},\dots,\theta_{n}).
\end{equation}
These equations provide the basis for analyzing the diagonal limit of
the form factors needed to calculate the finite volume mean values.

\subsection{Previous results for the mean values}

In the case of the ground state mean value it is expected that the large volume
limit reproduces the infinite volume expectation value:
\begin{equation*}
  \vev{\ordo}_{L}  =   \vev{\ordo}+\dots,\qquad mL\gg 1.
\end{equation*}
The dots denote finite volume corrections which are of order
$e^{-mL}$. An all-order result 
was found by LeClair and Mussardo in 
\cite{leclair_mussardo} where they considered the equivalent problem
of finite temperature one-point functions. They found an infinite
integral series
 (also-called the LeClair-Mussardo series):
\begin{equation}
\label{LM}
    \vev{\ordo}_{L}=\sum_n \frac{1}{n!}
\int \frac{d\theta_1}{2\pi}\dots  \frac{d\theta_n}{2\pi}
\left(\prod_j \frac{1}{1+e^{\eps_0(\theta_j)}}\right)
F^\ordo_{2n,c}(\theta_1,\dots,\theta_n).
\end{equation}
The form factors entering the integrals are the connected parts of
the diagonal infinite volume form factors:
\begin{equation*}
    F^\mathcal{O}_{2n,c}(\theta_1\dots\theta_n)\equiv
\text{finite piece of }\
F_{2n}^\mathcal{O}(\theta_1+i\pi+\eta_1,\dots,\theta_n+i\pi+\eta_n,\theta_n,\dots,\theta_1).
\end{equation*}
In general the object on the r.h.s. includes terms proportional to
$\eta_j/\eta_k$ and the rule for the connected part is to discard
all these terms.
For a detailed discussion of this diagonal limit we refer the reader to
\cite{fftcsa2}. Important properties of the functions $F^\ordo_{2n,c}$
will be discussed in section \ref{sec:FF}.

The physical interpretation of \eqref{LM} is
the following: in the finite volume situation the local operator
can interact with an arbitrary number of virtual particles which wind
around the finite volume. The amplitude associated to these
processes is the properly defined limit of the infinite volume
form factor. The fact that the normalization factor associated to
these processes is just the product of the weight functions
$1/(1+e^{\eps_0(\theta)})$ is a highly nontrivial consequence of the
integrability of the theory. 
The volume $L$ only enters \eqref{LM} through the 
the pseudoenergy $\eps_0(\theta)$ which is the unique
solution of \eqref{TBA}.
A graphical interpretation of the
integral series is shown in Figure \ref{fig:LM}.

\begin{figure}
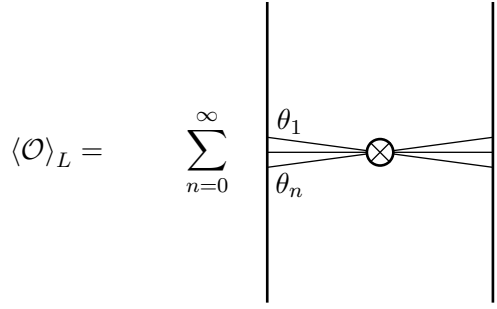

  \centering
\begin{pgfpicture}{6cm}{-1cm}{11cm}{4cm}

\pgfsetlinewidth{1pt}
\pgfline{\pgfxy(8,0)}{\pgfxy(8,4)}
\pgfline{\pgfxy(11,0)}{\pgfxy(11,4)}
\pgfcircle[stroke]{\pgfpoint{9.5cm}{2cm}}{5pt}
\pgfsetlinewidth{0.5pt}
\pgfline{\pgfxy(9.365,1.865)}{\pgfxy(9.635,2.135)}
\pgfline{\pgfxy(9.635,1.865)}{\pgfxy(9.365,2.135)}

\pgfline{\pgfxy(8,2.2)}{\pgfxy(9.34,2.02)}
\pgfline{\pgfxy(8,2)}{\pgfxy(9.34,2)}
\pgfline{\pgfxy(8,1.8)}{\pgfxy(9.34,1.98)}

\pgfline{\pgfxy(11,2.2)}{\pgfxy(9.66,2.02)}
\pgfline{\pgfxy(11,2)}{\pgfxy(9.66,2)}
\pgfline{\pgfxy(11,1.8)}{\pgfxy(9.66,1.98)}

\pgfputat{\pgfxy(8.3,2.25)}{\pgfbox[center,bottom]{$\theta_1$}}
\pgfputat{\pgfxy(8.3,1.4)}{\pgfbox[center,bottom]{$\theta_n$}}

\pgfputat{\pgfxy(7.2,2)}{\pgfbox[center,center]{$\displaystyle\sum_{n=0}^\infty$}}
\pgfputat{\pgfxy(5.2,2)}{\pgfbox[center,center]{$\vev{\ordo}_L=$}}

\end{pgfpicture}
\caption{Graphical interpretation of the LeClair-Mussardo formula for
  the finite volume ground state expectation values. Time runs in the vertical direction. In the
  horizontal direction periodic boundary conditions are understood. }
\label{fig:LM}
\end{figure}

We note that although \eqref{LM} is generally believed
to be true, a rigorous proof from first
principles is not yet available. In \cite{LM-sajat} it was proven to
all orders using
an expansion for finite volume form factors found in \cite{fftcsa2}
and to be presented below
(eq. \eqref{fftcsa2-result2}). However, the relation \eqref{fftcsa2-result2}
itself has not yet been proven (see the discussion below).

\bigskip

Mean values in finite volume excited states have been considered
previously in \cite{fftcsa2}. This work considered the IR limit, when
the states can be described with good approximation as Bethe Ansatz
states.
A first guess for the mean values in this approximation could
be that the mean value in a 
state $\ket{\theta_1,\dots,\theta_K}$ is simply the diagonal form
factor $F_{2K,c}^\ordo(\theta_1,\dots,\theta_K)$, possibly with a
normalization factor depending on $L$. However, the situation is more
complicated. In \cite{fftcsa2} it was found that there is expansion
for the mean value where each term  corresponds to a 
bipartite partitioning of the rapidities, where one subset of the
particles interacts with the operator and the remaining particles
only influence the normalization factor associated to this process. 

To
be specific, 
consider the  finite volume situation in the limit $mL\gg 1$ and a multiparticle
state described by Bethe roots $\{\theta_1,\dots,\theta_K\}$.
They
satisfy the Bethe equations
\begin{equation}
\label{logBY}
  Q_j=p_jL+\sum_{k\ne j}\delta(\theta_j-\theta_k)=2\pi
  I_j\qquad j=1\dots K,\quad I_j\in \mathbb{Z}.
\end{equation}
And important quantity is the density of states in rapidity space,
which is given by the Jacobian of the mapping given by \eqref{logBY}:
\begin{equation}
  \rho_K(\theta_1,\dots,\theta_K)=\det \mathcal{J}^{jk},
\qquad\qquad
\mathcal{J}^{jk}=\frac{\partial Q_j}{\partial \theta_k}.
\label{Jdef}
\end{equation}
In many non-relativistic models $\rho_K$ also describes the norm of
the Bethe Ansatz state and it is called the Gaudin determinant \cite{Gaudin-LL-norms,korepinBook}. 
 
We will also need the minors of the matrix $\mathcal{J}$.
For a
given bipartite partition 
\begin{equation*}
  \{\theta_1,\dots,\theta_K\} = \{\theta_+\}\cup \{\theta_-\}
\end{equation*}
\begin{equation*}
  \big|\{\theta_+\}\big|=K-n \quad\text{and}\quad \big|\{\theta_-\}\big|=n
\end{equation*}
we define the restricted determinant
\begin{equation}
\label{restricted-density}
  \tilde\rho_{K-n}(\{\theta_+\}|\{\theta_-\})=\det \mathcal{J}_+,
\end{equation}
where $\mathcal{J}_+$ is the sub-matrix of  $\mathcal{J}$ corresponding
to the particles in the set $\{\theta_+\}$.
Note that $\bar\rho_{N-n}(\{\theta_+\}|\{\theta_-\})$
still contains information about the complementary set of rapidities
$\{\theta_-\}$.

With these notations, the expression for the expectation
value reads
\begin{equation}
    \label{fftcsa2-result2}
\begin{split}
&\bra{\theta_1,\dots,\theta_K}\ordo\ket{\theta_1,\dots,\theta_K}_L=\\
&\hspace{2cm}\frac{1}{\rho_K(\theta_1,\dots,\theta_K)}
\sum_{\{\theta_+\}\cup \{\theta_-\}}
F^\ordo_{2n,c}\big(\{\theta_-\}\big)
\tilde\rho_{K-n}\big(\{\theta_+\}|\{\theta_-\}\big)+\ordo(e^{-\mu L}).
\end{split}
\end{equation}
Equation \eqref{fftcsa2-result2} is expected to describe the finite
size effects to all orders in $1/L$, beyond which there are only exponentially small 
corrections. The exponent $\mu$ is a mass scale proportional to $m$ which is
determined by the fusion processes in the theory \cite{klassen_melzer}. As remarked
earlier \eqref{fftcsa2-result2} has not yet been proven, but 
overwhelming numerical evidence has been gathered which support its
validity \cite{fftcsa2,takacs-palmai,takacs-dangerous-mu}. The
only analytical proofs available concern the cases $K=1$ 
and $K=2$ \cite{fftcsa2}.

\section{Excited states}

In many cases there are exact results available about the finite
volume energies of the excited states. 
The first paper to derive TBA equations for particle-like excited states in integrable QFT was
\cite{DoreyTateoAnCont1} where the authors used analytical
continuation in the volume parameter $L$ to cross from the ground
state to excited states with zero total momentum. This procedure was
motivated partly by closely related ideas in Quantum Mechanics \cite{Bender-Wu}.
Other works which derived excited state TBA equations include
\cite{blz-exc,DoreyTateoAnCont2,Teschner:2007ng}. 
In the known cases 
the only difference
between the excited state and ground state TBA is the addition of certain source
terms, or equivalently a modification of certain integral
contours. This property holds also for the various NLIE results, both
in field theory and lattice models \cite{ddv-exc,takacs-sine-Gordon-NLIE-exc,takacs-sineG-NLIE-odd,klumper-pearce1,klumper-pearce2}. 

Our approach to obtain excited state mean values  follows the
general idea of the papers \cite{DoreyTateoAnCont1,DoreyTateoAnCont2},
namely that there exists an analytic continuation procedure
which connects the excited states (or at least a subset of them) to the
ground state. Starting from the exact result \eqref{LM} for the ground
state expectation value it is natural to expect that a proper analytic
continuation of the full integral series yields the excited state mean
values. 

In \cite{DoreyTateoAnCont1} the analytic continuation was performed in
the volume parameter $L$.  The volume dependence of the series \eqref{LM} is only through
the pseudoenergy function $\eps_0(\theta)$ which is the solution of the TBA equation
\eqref{TBA}. 
In the course of the analytic continuation certain singularities of
the pseudoenergy function cross the real line, and this results 
in a necessary change of integration contour for the TBA equation.
We expect that the
same holds also for the mean values. Namely, the excited state
mean values will be given by equation  \eqref{LM} such that $\eps_0(\theta)$
has to be replaced by the solution of the appropriate
excited state TBA, and the integration contours for the multiple
integrals have to be changed accordingly.

In a generic theory a one-particle state is often represented in the excited
state TBA as a pair of complex rapidities which are complex
conjugates of each other. This is closely related to the 
fusion processes in the infinite volume theory. Accordingly it was
observed in \cite{Pozsgay:2008bf} that the leading exponential
corrections follow simply from the asymptotic formula \eqref{fftcsa2-result2} when
single particles are represented by an appropriate finite volume bound
state. This pattern of ``root doubling'' is expected to hold even in the exact result,
which would make
the formulas quite involved, as already the
one-root and two-root problems are technically complicated, as shown below.
Therefore in the remainer of this work
 we only consider the sinh-Gordon model, in which
excited state TBA equations of $K$ particles involve exactly $K$
complex roots  \cite{Teschner:2007ng} instead of $2K$ roots in a
typical case in models with particle fusion.
However, the analytic continuation procedure we intend to use
 is not established in the sinh-Gordon model; the paper
\cite{Teschner:2007ng} employed completely different
methods to derive the excited state TBA. Therefore the remainder of
this paper should be considered as a ``technical 
demonstration'' of how the calculations proceed: we just assume that there
is a proper analytic continuation and  we derive the excited state mean
values accordingly.
 We believe that our
results are valid in the sinh-Gordon model and that 
analogous formulas apply in those cases where single particles are
represented by a pair of complex rapidities. We give more remarks on
this in the Conclusions. 

Turning to the sinh-Gordon model we recall the results of
\cite{Teschner:2007ng}.
 Excited states can be characterized by a set of integer
quantum numbers $\{I_1,I_2,\dots,I_K\}$, $K\ge 1$ and a set of real rapidities
(Bethe roots) 
$\{\bar\theta_1,\bar\theta_2,\dots,\bar\theta_K\}$\footnote{We use the
  notation $\bar\theta_j$ for the rapidities 
  entering the excited state TBA equations in order to distinguish
  them from the auxiliary variables $\theta_j$ entering the multiple
  integrals for the mean values.}. The excited state TBA equation
reads
\begin{equation}
  \label{eq:excited_eps_sinhG}
   \eps(\theta)=mL\cosh\theta+
\sum_i \log S(\theta-\bar\theta_i-i\pi/2)
-\int  \frac{d\theta'}{2\pi}\varphi(\theta-\theta') \log(1+e^{-\eps(\theta')}).
\end{equation}
The condition for the Bethe roots is 
\begin{equation}
\label{barthetacond}
  \eps(\bar\theta_j+i\pi/2)=i(2I_j+1)\pi,\qquad
j=1\dots K, \qquad I_j\in\mathbb{Z}.
\end{equation}
The finite size energies are then given by 
\begin{equation}
  \label{eq:excitedE_sinhG}
  E=\sum_{j=1}^K m \cosh\bar\theta_j-\int \frac{d\theta}{2\pi}
  m\cosh(\theta) \log(1+e^{-\eps(\theta)}). 
\end{equation}

The condition \eqref{barthetacond} can be written explicitly as
\begin{equation}
  \label{eq:excited_eps_BY}
   mL \sinh\bar \theta_j+\sum_{k\ne j} \delta(\bar\theta_j-\bar\theta_k)+
\int
  \frac{d\theta'}{2\pi}i\varphi(\bar \theta_j-\theta'+i\pi/2)
  \log(1+e^{-\eps(\theta')})=
2I_j\pi.
\end{equation}
Equations \eqref{eq:excited_eps_BY} can be interpreted as Bethe equations
for the rapidities $\bar\theta_j$ modified by vacuum polarization
effects.

Given a state $\ket{\bar\theta_1,\dots,\bar\theta_K}$ we define an
integration contour $\mathcal{C}$ which consists of the real 
line and small circles clockwise around the points
$\bar\theta_j+i\pi/2$. Then the excited state TBA equations can be
re-written 
as
\begin{equation}
  \label{eq:excited_eps_sinhG2}
   \eps(\theta)=mL\cosh\theta
-\int_{\mathcal{C}}  \frac{d\theta'}{2\pi}\varphi(\theta-\theta')
\log(1+e^{-\eps(\theta')})
\end{equation}
\begin{equation}
  \label{eq:excitedE_sinhG2}
  E=-\int_{\mathcal{C}} \frac{d\theta}{2\pi} m\cosh(\theta) \log(1+e^{-\eps(\theta)}),
\end{equation}
where we used the fact that the function $(1+e^{-\eps(\theta)})$ has
simple zeroes at $\bar\theta_j+i\pi/2$, $j=1\dots K$.

As explained above, we conjecture that the mean values of local
operators in the given state can be expressed as
\begin{equation}
\label{LMexc}
\begin{split}
 &  \bra{\bar\theta_1,\bar\theta_2,\dots,\bar\theta_K} \ordo
  \ket{\bar\theta_1,\bar\theta_2,\dots,\bar\theta_K}_{L}=\\
&\hspace{3cm}
\sum_n \frac{1}{n!}
\int_{\mathcal{C}} \frac{d\theta_1}{2\pi}\dots  \frac{d\theta_n}{2\pi}
\left(\prod_j \frac{1}{1+e^{\eps(\theta_j)}}\right)
F^\ordo_{2n,c}(\theta_1,\dots,\theta_n),
\end{split}
\end{equation}
where the pseudoenergy function $\eps(\theta)$ and the roots
$\bar\theta_j$ are given as a
solution of the equations
\eqref{eq:excited_eps_sinhG}-\eqref{barthetacond}, and 
the contour $\mathcal{C}$ is determined by the roots $\bar\theta_j$.

Although the equation above is well-defined and expected to be exact,
it is not very enlightening. The remainder of this work is devoted to
the evaluation of the residues at the points
$\bar\theta_j+i\pi/2$ such that in our final formulas the integrals
only run over the real line. This leads to a representation where the
physical meaning of the individual terms is more transparent.

Note that evaluating all the residues 
 the rapidities $\bar\theta_j+i\pi/2$ will appear as
multiple insertions in the connected form factors. This motivates the
investigation of the degenerate cases for the form factors (section
\ref{sec:FF}).
The reader who is not interested in the
technical details of these calculations may skip the following four
sections and turn to section \ref{sec:final} which presents our final result
for arbitrary multiparticle states.

\section{Evaluating the residues}

Here we evaluate the contour integrals in formula \eqref{LMexc} and
express the result as a sum of integrals over the real line. For simplicity we only consider
one-particle and two-particle states; it is straightforward to
generalize these formulas to  higher particle number.

In writing down the multiple integrals we will frequently make use of
the shorthand
\begin{equation*}
\int \widetilde{d\theta} \equiv  \int \frac{d\theta}{2\pi} \frac{1}{1+e^{\eps(\theta)}}.
\end{equation*}
In all formulas below it is understood that $\eps$ is the solution of
the TBA equation corresponding to the finite volume state in
question. 

\subsection{One-particle states}

First we consider one-particle states, ie. $K=1$. In this case the
excited state TBA is simply
\begin{equation}
  \label{eq:excited_eps_sinhG1p}
   \eps(\theta)=mL\cosh\theta+
 \log S(\theta-\bar{\theta}-i\pi/2)
-\int  \frac{d\theta'}{2\pi}\varphi(\theta-\theta')
\log(1+e^{-\eps(\theta')}). 
\end{equation}
We define the function
\begin{equation}
\label{norm_fact}
\bar Q(\bar{\theta},L)=-i \eps(\bar{\theta}+i\pi/2)=mL\sinh\bar{\theta}-
\int
  \frac{d\theta'}{2\pi}i\varphi(\theta'-\bar{\theta}+i\pi/2)
  \log(1+e^{-\eps(\theta')}).
\end{equation}
Then the quantization condition for $\bar{\theta}$ 
is simply $\bar Q(\bar{\theta})=2\pi (I+1/2)$.

The integration contour $\mathcal{C}$ in \eqref{LMexc} consists of the
real line and a small circle around the point $\bar\theta+i\pi/2$. The
residue of the weight function at this point is
\begin{equation*}
 \text{Res}_{\theta=\bar\theta+i\pi/2} \frac{1}{1+e^{\eps(\theta)}}=
\left(-\left.\frac{\partial\eps(\theta)}{\partial\theta}
\right|_{\theta=\bar\theta+i\pi/2}\right)^{-1}.
\end{equation*}

Evaluating the residues in the multiple integral series 
we write
\begin{equation}
\label{LM1pkifejt}
  \bra{\bar\theta}\ordo \ket{\bar\theta}=\sum_{j,k=0}^\infty
\mathcal{L}_{jk}.
\end{equation}
Here $\mathcal{L}_{jk}$ represents the contributions where in the $n=j+k$ term
in the series we integrate $j$ times around $\bar\theta+i\pi/2$ and
$k$ times over the real line. Picking up the residues at
$\bar\theta+i\pi/2$ we obtain
\begin{equation}
\label{LM1pkifejt2}
  \mathcal{L}_{jk}=\frac{1}{j!k!} \int \widetilde{d\theta_1}\dots \widetilde{d\theta_k}
\frac{F^\ordo_{2(j+k),c}(\bt+i\pi/2,\bt+i\pi/2,\dots,\theta_1,\dots,\theta_k)}{n_1^j},
\end{equation}
where 
\begin{equation}
\label{n1def1}
n_1=(-i)\left.  \frac{\partial\epsilon(\theta)}{\partial \theta}\right|_{\theta=\bar\theta+i\pi/2}=
mL\cosh\bar\theta
+\varphi(0)+\int  \frac{d\theta'}{2\pi}i\varphi(\theta'-\bar{\theta}+i\pi/2)
\frac{1}{1+e^{\eps(\theta')}}
\left(\frac{\partial\eps}{\partial\theta }\right),
\end{equation}
and in the form factor above there are $j$ insertions of $\bar\theta+i\pi/2$.

It is useful to derive an integral series for the expressions
\eqref{n1def1}.
Introducing the kernel
\begin{equation}
\label{Kkernel}
  \hat{K}(\theta,\theta')=\varphi(\theta-\theta')\frac{1}{1+e^{\eps(\theta')}}
\end{equation}
and differentiating \eqref{eq:excited_eps_BY} one obtains
\begin{align*}
 (1-\hat{K})\frac{\partial\eps}{\partial\theta }=mL\sinh\theta-i\varphi(\theta-\bar\theta+i\pi/2).
\end{align*}
Introducing the resolvent $\hat{M}$ satisfying 
\begin{equation*}
  (1+\hat{M})(1-\hat{K})=1
\end{equation*}
we have
\begin{equation*}
\frac{\partial\eps}{\partial\theta }
=(1+\hat{M})\Big(mL\sinh\theta-i\varphi(\theta-\bar\theta+i\pi/2)\Big)\quad\quad
\frac{\partial\eps}{\partial\bar\theta }=(1+\hat{M})i\varphi(\theta-\bar\theta+i\pi/2).
\end{equation*}
Using $1+\hat{M}=\sum_{n=0}^\infty K^n$ we obtain the integral series
\begin{equation}
\begin{split}
\label{n1_szepen_kifejezve}
& n_1 =\varphi(0)+
mL\cosh\bar\theta
+\int  \widetilde{d\theta} \
i\varphi(\theta-\bar{\theta}+i\pi/2)(mL\sinh\theta-i\varphi(\theta-\bar\theta+i\pi/2))\\
&\qquad
+\sum_{n=2}^\infty \int \widetilde{d\theta_1}\dots
 \widetilde{d\theta_n}\
i\varphi(\theta_1-\bar{\theta}+i\pi/2)\varphi(\theta_1-\theta_2)\dots
\varphi(\theta_{n-1}-\theta_n) \times\\
&\hspace{6cm}\times (mL\sinh\theta_n-i\varphi(\theta_n-\bar\theta+i\pi/2)).
\end{split}
\end{equation}

The equations \eqref{LM1pkifejt}, \eqref{LM1pkifejt2} and
\eqref{n1_szepen_kifejezve} serve as an intermediate result for the
one-particle expectation values.
Expression \eqref{LM1pkifejt2} involves connected diagonal form factors with multiple insertions of
the same rapidity.
It will be shown in Section
\ref{sec:FF} that these cases
 can be expressed as sums of form factors
with only a single insertion of $\bar\theta+i\pi/2$. Moreover, after a resummation a remarkably
simple formula is found, which reproduces the
asymptotic result following from \eqref{fftcsa2-result2}.
   This
is presented in Section \ref{sec:1pfinal}.

\subsection{Two-particle states}

In the two-particle case the excited state TBA takes the form
\begin{equation}
  \label{2particleTBA}
   \eps(\theta)=mL\cosh\theta+
\log S(\theta-\bar\theta_1-i\pi/2)+\log S(\theta-\bar\theta_2-i\pi/2)
-\int  \frac{d\theta'}{2\pi}\varphi(\theta-\theta') \log(1+e^{-\eps(\theta')}).
\end{equation}

The integration contour $\mathcal{C}$ in \eqref{LMexc} consists of the
real line and two small circles around the points $\bar\theta_{1,2}+i\pi/2$. The
residues of the weight function are
\begin{equation*}
 \text{Res}_{\theta=\bar\theta_j+i\pi/2} \frac{1}{1+e^{\eps(\theta)}}=
\left(-\left.\frac{\partial\eps(\theta)}{\partial\theta}\right|_{\theta=\bar\theta_j+i\pi/2}\right)^{-1},
\qquad j=1,2.
\end{equation*}
Taking the derivatives we obtain
\begin{equation}
\label{n12kakki}
\begin{split}
&  n_j=-i
\left.\frac{\partial\eps(\theta)}{\partial\theta}\right|_{\theta=\bar\theta_j+i\pi/2}=
  mL\cosh\bar\theta_j+\varphi(\bar\theta_1-\bar\theta_2) +\varphi(0)-\\
&\hspace{4cm}-\int  \frac{d\theta'}{2\pi}i\varphi(\theta'-\bar\theta_j-i\pi/2)
\frac{1}{1+e^{\eps(\theta')}}
\frac{\partial\eps}{\partial\theta },\qquad j=1,2.
\end{split}
\end{equation}
It follows from \eqref{2particleTBA} that the derivative of the
pseudoenergy satisfies the integral equation
\begin{align}
\label{n12kakki2}
 (1-\hat{K})\frac{\partial\eps}{\partial\theta
 }=mL\sinh\theta+i\varphi(\theta-\bar\theta_1-i\pi/2)+i\varphi(\theta-\bar\theta_2-i\pi/2),
\end{align}
where the integral operator $K(\theta,\theta')$ is defined in
\eqref{Kkernel}. The explicit form of $n_{1,2}$ is
\begin{equation}
\begin{split}
\label{n1_szepen_kifejezve12}
& n_j =\varphi(0)+\varphi_{12}+
mL\cosh\bar\theta_j\\
&\qquad-\int  \widetilde{d\theta} \
i\varphi(\theta-\bar{\theta}_j-i\pi/2)
(mL\sinh\theta+i\varphi(\theta-\bar\theta_1-i\pi/2)+i\varphi(\theta-\bar\theta_2-i\pi/2))\\
&\qquad
-\sum_{n=2}^\infty \int \widetilde{d\theta_1}\dots
 \widetilde{d\theta_n}\
i\varphi(\theta_1-\bar{\theta}_j-i\pi/2)\varphi(\theta_1-\theta_2)\dots
\varphi(\theta_{n-1}-\theta_n) \times\\
&\hspace{4cm}\times 
(mL\sinh\theta_n+i\varphi(\theta_n-\bar\theta_1-i\pi/2)+i\varphi(\theta-\bar\theta_2-i\pi/2)),
\qquad j=1,2.
\end{split}
\end{equation}

Evaluating the residues in \eqref{LMexc} the result can be written in
the form 
\begin{equation}
\label{LM2pkifejt}
  \bra{\bar\theta_1,\bar\theta_2}\ordo
  \ket{\bar\theta_1,\bar\theta_2}_L=
\sum_{j,k,l} \mathcal{L}_{jkl},
\end{equation}
where $\mathcal{L}_{jkl}$ denotes those contributions of the $n=j+k+l$ term of
the original series where
$\bar\theta_1+i\pi/2$ has been inserted $j$ times,
$\bar\theta_2+i\pi/2$ has been inserted $k$ times, and there are $l$
auxiliary rapidities which are integrated over. Explicitly
\begin{equation}
\label{LM2pkifejt2}
  \mathcal{L}_{jkl}=\frac{1}{j!k!l!} \int \widetilde{d\theta_1}\dots \widetilde{d\theta_l}
\frac{F^\ordo_{2(j+k+l),c}(\bt_1+i\pi/2,\bt_1+i\pi/2,\dots,\bt_2+i\pi/2,\bt_2+i\pi/2,\dots,\theta_1,\dots,\theta_l)}
{n_1^jn_2^k}.
\end{equation}
Equations \eqref{LM2pkifejt}, \eqref{LM2pkifejt2} together with
\eqref{n12kakki} represent an explicit result for the two-particle
mean values. However, similar to the one-particle case further
investigation of the degenerate cases of the form factors 
leads to a remarkably simpler
formula, presented in Section
\ref{sec:2pfinal}. Eq. \eqref{LM2pkifejt2} motivates the study of
those diagonal form factors, where there are multiple insertions of
two different rapidities.


\section{Properties of the connected diagonal form factors}

\label{sec:FF}

In this section we consider the connected evaluation of the diagonal
form factors of a local operator $\mathcal{O}$. 
They are defined as
\begin{equation}
\label{fcdef}
    F^\mathcal{O}_{2n,c}(\theta_1\dots\theta_n)=
\text{finite piece of }\
F_{2n}^\mathcal{O}(\theta_1+i\pi+\eta_1,\dots,\theta_n+i\pi+\eta_n,\theta_n,\dots,\theta_1).
\end{equation}
It follows from the exchange axiom \eqref{exchange} that the functions above are
completely symmetric in their variables. 
They are meromorphic functions on the entire complex plain
and are invariant with respect to an overall boost of the
rapidities. It can be proven that they are $i\pi$ periodic and that
they inherit the clustering property: \cite{Delfino:1996nf}
  \begin{align}
    \label{eq:diagFF_cluster}
\lim_{ \Lambda\to\infty}
F^\mathcal{O}_{2n+2m,c}(\theta_1+\Lambda,\dots,\theta_n+\Lambda,\theta'_1,\dots,\theta'_m)=
\frac{1}{\vev{\mathcal{O}}} 
F^\mathcal{O}_{2n,c}(\theta_1,\dots,\theta_n)
F^\mathcal{O}_{2m,c}(\theta'_1,\dots,\theta'_m).
  \end{align}

No theorems are known about the growth properties of the diagonal form
factors. However we can assume that there exists a $K\in\valos^+$ such
that for any $n$
\begin{equation*}
\big|  F^\mathcal{O}_{2n,c}(\theta_1,\dots,\theta_n)\big|< n!
K^n,\qquad
\theta_j\in \complex.
\end{equation*}
This is in accordance with all previous experience and is enough to
ensure the convergence of the series \eqref{LM} for large enough $L$.

We note that the two-particle connected form factor
$F_{2c}^\ordo(\theta)$ does not depend on $\theta$ by
Lorentz-invariance. Therefore it will be simply denoted by
$F_{2c}^\ordo$ in the rest of this work.

\subsection{Degenerate cases}

\label{sec:degenerate}

In the following we consider degenerate diagonal form factors, ie. when there are multiple copies of
the same rapidity present. We only consider those
cases which are relevant for the one-particle and two-particle mean
values, namely when there are multiple copies of at most two different
rapidities. 

In the calculations we will extensively make use of sets and lists
of indices, therefore it is useful to introduce a few definitions and notations for
further use. 

A multiset is a generalization of a set allowing
members to appear more than once. The difference between a multiset
and a sequence is that in a multiset the order of the elements does
not matter, whereas in a sequence it does.
Multisets of
integers will be denoted using curly braces, for example
$A=\{a_1,a_2,\dots,a_n\}$. 
Sequences of integers will be denoted by braces:
  $s=(s_1,s_2,\dots,s_n)$.
If the explicit numbers are given then we don't use separation marks,
for example $s=(1342)$. Both for sequences and multisets, multiple addition of a given
number will be sometimes denoted in the superscript, for example
\begin{equation*}
  \{1^{(\times 3)}\}\equiv \{1,1,1\}.
\end{equation*}

Concatenation of sequences or addition of new elements will be denoted
simply by writing down the constituents without separation marks. For
example if 
\begin{equation*}
  A=(123) \qquad \text{and} \qquad B=(31)
\end{equation*}
then
\begin{equation*}
  (2A1B)=(2123131).
\end{equation*}

Unions of multisets is defined as a complete addition of the elements
and it will be denoted simply by comma, for example if $A=\{1,1\}$ and
$B=\{1,2,3\}$ then 
\begin{equation*}
  \{A,B\}=\{1,1,1,2,3\}.
\end{equation*}

The diagonal form factor evaluated on a multiset $A$ is defined as
\begin{equation*}
  F^\ordo_{2n,c}(A)\equiv F^\ordo_{2n,c}(\theta_{A_1},\theta_{A_2},\dots,\theta_{A_n}).
\end{equation*}

Given a sequence of integers $s$ we define
\begin{equation*}
  [s]_\varphi\equiv
\varphi_{s_1s_2}\varphi_{s_2s_3} \dots \varphi_{s_{n-1}s_n},
\end{equation*}
where it is understood that $n$ is the length of $s$ and
\begin{equation*}
  \varphi_{jk}=\varphi(\theta_j-\theta_k).
\end{equation*}
In the simplest case $[11]_\varphi=\varphi(0)$. Sometimes we will use
the shorthand $\varphi_0=\varphi(0)$. 

Given a multiset $A$ we define $S_{a,b}(A)$ to be the
set of all sequences, which contain every member of $A$ exactly once,
and where the first and
last elements are $a$ and $b$, respectively. It 
is understood that $S_{a,b}(A)$ is empty when $a\notin A$ or $b\notin
A$. For example 
\begin{equation*}
  S_{1,2}\big(\{1,1,2,3\}\big)=\{(1132),(1312)\}.
\end{equation*}

\subsubsection{Multiple copies of one rapidity}

The simplest degenerate case is when  in
the four-particle diagonal form factor the two rapidities
coincide. 
\begin{thm}
\begin{equation*}
  F^\ordo_{4c}(\theta,\theta)=2\varphi_0 F^\ordo_{2c}.
\end{equation*} 
\end{thm}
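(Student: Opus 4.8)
The plan is to work from the definition \eqref{fcdef} while keeping the two physical rapidities distinct, taking the diagonal limit only at the very end. I would therefore study
\[
\Phi(\eta_1,\eta_2)=F_4^\ordo(\theta_1+i\pi+\eta_1,\theta_2+i\pi+\eta_2,\theta_2,\theta_1),\qquad \theta_{12}\equiv\theta_1-\theta_2,
\]
read off $F^\ordo_{4c}(\theta_1,\theta_2)$ as its $\eta$-independent piece (discarding the $1/\eta_j$ and $\eta_i/\eta_j$ terms), and only then send $\theta_{12}\to0$. Setting $\theta_1=\theta_2$ at the outset is not an option: by the exchange axiom \eqref{exchange} with $S(0)=-1$ the function $\Phi$ vanishes identically once its last two arguments coincide, so the value $F^\ordo_{4c}(\theta,\theta)$ can only arise as a limit in which kinematical poles collide.

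First I would locate the kinematical poles of $\Phi$ in the regulators. By the partner structure of \eqref{kinpole} there are simple poles at $\eta_1\in\{0,-\theta_{12}\}$ (first argument paired with the fourth, resp.\ the third) and at $\eta_2\in\{0,\theta_{12}\}$ (second argument paired with the third, resp.\ the fourth). Using \eqref{exchange} to move each pole pair to the front and then applying \eqref{kinpole}, every residue reduces to the two–particle form factor; here I would use that $F_2^\ordo(i\pi+\eta)$ has no kinematical pole (the $n=0$ case of \eqref{kinpole} gives a vanishing residue), so $F_2^\ordo(i\pi)=F^\ordo_{2c}$ is a constant. The decisive point is that the residues at the \emph{diagonal} poles $\eta_j=0$ are proportional to $1-S(\theta_{12}+\eta_1)S(-\theta_{12})$, hence $O(\eta)$ and soft, feeding only the discarded $\eta_i/\eta_j$ cross terms, whereas the residues at the \emph{off-diagonal} poles $\eta_2=\theta_{12}$ and $\eta_1=-\theta_{12}$ are proportional to $1+S(\theta_{12})$ and $1+S(-\theta_{12})$ and survive. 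A quick check that all double residues vanish shows that $\Phi$ carries only single poles plus a regular part.

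Assembling these, the connected part is the $\eta_1^0\eta_2^0$ coefficient, to which only the tails of the two off-diagonal poles contribute, giving schematically $\big(\tilde R_{-}(0)-R_1(0)\big)/\theta_{12}$ plus a regular remainder $h(0,0)$. Simplifying with \eqref{S} (the identity $S(\theta_{12})S(-\theta_{12})=1$) collapses the two residues into $-i\big(1+S(-\theta_{12})\big)\big[F_2^\ordo(\theta_{12}+i\pi)+F_2^\ordo(-\theta_{12}+i\pi)\big]/\theta_{12}$. Expanding $1+S(-\theta_{12})=i\varphi_0\,\theta_{12}+O(\theta_{12}^2)$ — using $S'=i\varphi S$, \eqref{phieq} and $S(0)=-1$ — together with $F_2^\ordo(\pm\theta_{12}+i\pi)\to F^\ordo_{2c}$, the $1/\theta_{12}$ cancels and the limit is exactly $2\varphi_0 F^\ordo_{2c}$; the factor $2$ is the sum of the two symmetric off-diagonal contributions.

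The main obstacle is precisely this colliding-pole analysis: as $\theta_{12}\to0$ the off-diagonal poles merge with the diagonal ones, the $\eta\to0$ and $\theta_{12}\to0$ limits fail to commute, and the $1/\theta_{12}$ from the pole tails must be balanced against the $O(\theta_{12})$ softening of $1+S(-\theta_{12})$. I would also need to argue that the operator-dependent regular part $h(0,0)$ drops out of the limit; this is forced by the fact that $\Phi$ vanishes identically at $\theta_{12}=0$, so its regular part is $O(\theta_{12})$, which is what makes the final answer $2\varphi_0 F^\ordo_{2c}$ genuinely universal.
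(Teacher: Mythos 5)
Your proof is correct, but it is organized around a genuinely different decomposition than the paper's. Both arguments rest on the same two pillars --- the kinematical pole axiom \eqref{kinpole} reducing every residue to $F_2^\ordo$, and the exchange axiom \eqref{exchange} with $S(0)=-1$ forcing a vanishing at a degenerate point --- but they deploy them differently. The paper keeps four independent rapidities and subtracts \emph{all four} kinematical poles to build a ``fully connected'' object $F^\ordo_{4,fc}$; its key lemma is that this subtracted object inherits the exchange properties and is therefore both vanishing and regular at $\theta_1=\theta_2=\theta_3=\theta_4$, after which the theorem is read off from the connected parts of the four linearized subtraction terms, which contribute $2\varphi_0 F_2^\ordo(i\pi-\theta_{12})$, $2\varphi_0 F_2^\ordo(i\pi+\theta_{12})$, $-\varphi_0 F^\ordo_{2c}$ and $-\varphi_0 F^\ordo_{2c}$, summing to $2\varphi_0 F^\ordo_{2c}$ in the limit. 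You instead stay with the regulated definition \eqref{fcdef} at fixed $\theta_{12}\neq 0$, split the four $\eta$-poles into soft diagonal ones (feeding only the discarded $\eta_i/\eta_j$ terms) and two surviving off-diagonal ones with exact residues proportional to $1+S(\mp\theta_{12})$, and use the cheaper observation that $\Phi\equiv 0$ at $\theta_{12}=0$ to dispose of the regular part; the factor $2$ then arises as $1+1$ from the two pole tails. Your route buys an exact intermediate formula for $F^\ordo_{4c}(\theta_1,\theta_2)$ at distinct rapidities and makes transparent why the $\eta\to 0$ and $\theta_{12}\to 0$ limits fail to commute; the paper's route buys the manifestly exchange-symmetric object $F^\ordo_{4,fc}$, which is precisely the construction that is reused in the higher-particle Theorems 2 and 3. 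One step you should spell out: since the diagonal and off-diagonal poles merge at $\theta_{12}=0$, deducing $h(0,0)=O(\theta_{12})$ from $\Phi\equiv 0$ is not immediate --- you must first note that the merged singular parts cancel pairwise (your explicit residue formulas do give that the residue at $\eta_1=0$ and the residue at $\eta_1=-\theta_{12}$ sum to zero identically at $\theta_{12}=0$, and likewise for the $\eta_2$ pair); only after that cancellation is the regular part separately forced to vanish, since a nonvanishing regular part cannot compensate a pole.
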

\begin{proof}
   Consider the form factor $F^\ordo_4(\theta_1,
 \theta_2,i\pi+\theta_3,i\pi+\theta_4)$ at $\theta_{1,2,3,4}\to
 \theta$. In this case there are 4 kinematical poles and no double singularities. Subtracting all
 four poles we obtain the fully
 connected form factor \footnote{The main idea of this proof was
   suggested  by G\'abor Tak\'acs.}
 \begin{equation*}
   \begin{split}
        F^\ordo_{4,fc}=&F^\ordo_4(\theta_1+i\pi,\theta_2+i\pi,\theta_3,\theta_4)-\\
&-\frac{i}{\theta_1-\theta_3} \left[ S(\theta_1-\theta_2)
  -S(\theta_3-\theta_4)\right] F^\ordo_2(\theta_2+i\pi,\theta_4)\\
&-\frac{i}{\theta_2-\theta_3}\left[1-S(\theta_3-\theta_4)S(\theta_1-\theta_2)\right] 
F^\ordo_2(\theta_1+i\pi,\theta_4)\\
&-\frac{i}{\theta_1-\theta_4} \left[ S(\theta_1-\theta_2)S(\theta_3-\theta_4)
  -1\right] F^\ordo_2(\theta_2+i\pi,\theta_3)\\
&-\frac{i}{\theta_2-\theta_4} \left[S(\theta_3-\theta_4)  -
  S(\theta_1-\theta_2)\right] F^\ordo_2(\theta_1+i\pi,\theta_3). 
   \end{split}
 \end{equation*}
The object above has manifestly the same exchange properties as the original
form factor. Therefore it vanishes as $\theta_1\to\theta_2$ or
$\theta_3\to\theta_4$ and even the point
$\theta_1=\theta_2=\theta_3=\theta_4$ is completely regular and
continuous. 
Using the approximation $S(\eps)\approx -1-i\varphi_0 \eps$ we obtain
 \begin{equation*}
   \begin{split}
        F^\ordo_{4,fc}\approx &F^\ordo_4(\theta_1+i\pi,\theta_2+i\pi,\theta_3,\theta_4)-\\
&-\frac{1}{\theta_1-\theta_3}
(\theta_1-\theta_2-\theta_3+\theta_4) \varphi_0 F^\ordo_2(\theta_2+i\pi,\theta_4)\\
&-\frac{1}{\theta_2-\theta_3}(\theta_3-\theta_4+\theta_1-\theta_2)
 \varphi_0 F^\ordo_2(\theta_1+i\pi,\theta_4)\\
&-\frac{1}{\theta_1-\theta_4} (-\theta_1+\theta_2-\theta_3+\theta_4)
 \varphi_0 F^\ordo_2(\theta_2+i\pi,\theta_3)\\
&-\frac{1}{\theta_2-\theta_4} (\theta_3-\theta_4  -\theta_1+\theta_2)
\varphi_0 F^\ordo_2(\theta_1+i\pi,\theta_3).
   \end{split}
 \end{equation*}
The connected FF is obtained by taking $\theta_1\to \theta_4$ and
$\theta_2\to \theta_3$ and subtracting the poles of the form
$(\theta_1-\theta_4)/(\theta_2-\theta_3)$ and
$(\theta_2-\theta_3)/(\theta_1-\theta_4)$:
\begin{equation*}
  \begin{split}
      F^\ordo_{4c}(\theta_1,\theta_2)=&\lim_{\theta_1\to
    \theta_4}\lim_{\theta_2\to \theta_3}\Big[
F^\ordo_4(\theta_1+i\pi,\theta_2+i\pi,\theta_3,\theta_4)- \\
& -\frac{\theta_1-\theta_4}{\theta_2-\theta_3}
 \varphi_0 F^\ordo_2(\theta_1+i\pi,\theta_4)
-\frac{\theta_2-\theta_3}{\theta_1-\theta_4}
 \varphi_0 F^\ordo_2(\theta_2+i\pi,\theta_3)\Big].
  \end{split}
\end{equation*}
Using 
\begin{equation*}
\lim_{\theta_{1,2,3,4}\to \theta} F^\ordo_{4,fc}(\theta_1+i\pi,\theta_2+i\pi,\theta_3,\theta_4)=0
\end{equation*}
we obtain
\begin{equation*}
  \lim_{\theta_1\to \theta_2} F^\ordo_{4c}(\theta_1,\theta_2)=
2\varphi_0 F^\ordo_{2,c}.
\end{equation*}
\end{proof}

A similar calculation can be performed in the three-particle case when
two rapidities coincide:
\begin{thm}
  \begin{equation*}
  F^\ordo_{6c}(\theta_1,\theta_1,\theta_3)=
2(\varphi_0 F^\ordo_{4c}(\theta_1,\theta_3) +\varphi_{13}^2 F^\ordo_{2c}).
\end{equation*}
\end{thm}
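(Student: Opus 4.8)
The plan is to repeat the strategy of the four-particle theorem one particle number higher. I start from
\[
F^\ordo_6(\theta_1+i\pi,\theta_2+i\pi,\theta_3+i\pi,\theta_4,\theta_5,\theta_6)
\]
and subtract all of its kinematical singularities to build a fully connected object $F^\ordo_{6,fc}$. Each simple pole, occurring when a shifted rapidity $\theta_a+i\pi$ meets an unshifted one $\theta_b$, is removed with \eqref{kinpole}, reducing $F_6$ to $F_4$ times the factor $(1-\prod_k S)$. The new feature compared with the four-particle case is the presence of double poles, arising when two disjoint shifted--unshifted pairs collide simultaneously; these must also be subtracted, and their residues reduce $F_6$ down to $F_2$. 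The subtractions are arranged so that $F^\ordo_{6,fc}$ inherits the exchange properties of $F_6$, is regular everywhere, and --- by the exchange axiom together with $S(0)=-1$ --- vanishes whenever two rapidities belonging to the same (shifted or unshifted) group coincide.

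Next I form the connected diagonal limit by the pairings $\theta_1\to\theta_6$, $\theta_2\to\theta_5$, $\theta_3\to\theta_4$ and discard the spurious ratio terms $(\theta_a-\theta_b)/(\theta_c-\theta_d)$, exactly as in the four-particle proof, using $S(\eps)\approx -1-i\varphi_0\eps$ for the internal phases. I then let $\theta_2\to\theta_1$. In this limit the four rapidities $\theta_1+i\pi,\theta_2+i\pi,\theta_5,\theta_6$ collapse onto $\theta_1$, reproducing precisely the four-rapidity configuration of Theorem~1, while $\theta_3+i\pi,\theta_4$ remain at $\theta_3$ as a spectator pair. Since $F^\ordo_{6,fc}$ vanishes once the two shifted and the two unshifted $\theta_1$-rapidities coincide, the entire answer is carried by the subtraction terms. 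The single-pole ($F_4$) contributions, in which the $\theta_1$-cluster degenerates in the manner of Theorem~1, assemble into $2\varphi_0\,F^\ordo_{4c}(\theta_1,\theta_3)$; here the surviving four-particle object is the $\theta_3$-pair together with one $\theta_1$-slot, and the prefactor $2\varphi_0$ is inherited directly from the four-particle result.

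The remaining term comes from the double-pole contributions, in which the spectator pair at $\theta_3$ is also eliminated through its own kinematical residue, leaving the reduced form factor $F^\ordo_{2c}$. The residue at the $\theta_3$-pole carries the product $\prod_k S(\theta_3-\theta_k)$ over the remaining rapidities; expanding these phases about the $\theta_1$-cluster to the required order and using $\varphi(\theta_1-\theta_3)=\varphi_{13}$ produces two factors of $\varphi_{13}$, that is $\varphi_{13}^2=[131]_\varphi$. Collecting the two pieces gives $2\big(\varphi_0 F^\ordo_{4c}(\theta_1,\theta_3)+\varphi_{13}^2 F^\ordo_{2c}\big)$, as claimed; one may check consistency by sending $\theta_3\to\theta_1$, which reproduces the fully degenerate value $6\varphi_0^2 F^\ordo_{2c}$.

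I expect the main obstacle to be the combinatorial bookkeeping generated by the enlarged set of kinematical poles and, in particular, by the double poles that are absent at the four-particle level. One must simultaneously control the Theorem~1 degeneration of the $\theta_1$-cluster and the optional removal of the $\theta_3$-pair, expand the spectator scattering phases to second order so that the coefficient $\varphi_{13}^2$ emerges with the correct factor of $2$, and keep the ratio-pole subtractions mutually consistent across these nested limits. Everything else is a direct, if lengthier, replica of the four-particle computation.
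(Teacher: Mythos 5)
Your overall strategy is the paper's own: build a pole-subtracted, exchange-covariant object $F^\ordo_{6,fc}$, invoke the exchange axiom together with $S(0)=-1$ to conclude that it vanishes at the degenerate point, and read the degenerate connected form factor off the subtraction terms; your final formula and your consistency check $\theta_3\to\theta_1$ (giving $3!\,\varphi_0^2F^\ordo_{2c}$) are also correct. But the way the subtraction is organized is where the proof actually lives, and there your plan has a genuine gap. You propose to remove \emph{all} kinematical singularities, including the $\theta_3$-pair pole and separate ``double pole'' terms. The nine single-pole subtraction terms have full four-particle form factors as residues, and each of these $F^\ordo_4$'s carries its own kinematical poles; consequently every simultaneous (double) singularity is already counted \emph{twice} by the single-pole terms, while it occurs only once in $F^\ordo_6$. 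A consistent scheme therefore has to add the double-pole terms back (inclusion--exclusion), not ``also subtract'' them as you state, or else replace the residues by regularized objects --- and in either case one must still exhibit prefactors that preserve the exchange property, which you never construct. The paper sidesteps all of this by subtracting only the four poles among the rapidities that actually coincide ($\theta_1,\theta_2$ against $\theta_1',\theta_2'$), leaving the $\theta_3$-pair untouched inside the residual $F^\ordo_4$'s: that alone suffices for regularity and vanishing at the degenerate point.

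Moreover, the bookkeeping you anticipate is not what a correct expansion produces. The prefactors $[1-\prod S]$ of the $\theta_1$-cluster poles contain the S-matrices linking to the $\theta_3$-pair, so their first-order expansion yields $\varphi_{13}F^\ordo_{4c}(\theta_1,\theta_3)$ contributions alongside the $\varphi_0$ ones; these cancel only after all four terms are combined, \emph{including} the cross terms between the prefactor expansions and the pole parts of the residual $F^\ordo_4$'s. In fact, in the paper the entire $\varphi_{13}^2F^\ordo_{2c}$ contribution arises precisely from those cross terms, through the expansion $F^\ordo_4(\theta_2+i\pi,\theta_3+i\pi,\theta_3',\theta_1')=F^\ordo_{4c}(\theta_1,\theta_3)+\bigl(\tfrac{\theta_2-\theta_1'}{\eps_3}+\tfrac{\eps_3}{\theta_2-\theta_1'}\bigr)\varphi_{13}F^\ordo_{2c}$, and not from any double-pole subtraction. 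In your scheme, $F^\ordo_{2c}$-level terms with coefficients $\varphi_0^2$, $\varphi_0\varphi_{13}$ and $\varphi_{13}^2$ arise from several distinct sources (double poles inside the $\theta_1$-cluster, mixed double poles, cross terms), and nothing in your argument forces everything except $2\varphi_{13}^2$ to cancel; likewise ``the prefactor $2\varphi_0$ is inherited directly from the four-particle result'' is an assertion, not a derivation. So while the strategy is sound and shared with the paper, the two coefficients that constitute the theorem are claimed rather than computed, and the specific routing of contributions you describe would not survive actually carrying out the expansion.
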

\begin{proof}
Consider the fully connected form factor
\begin{equation*}
\begin{split}
F^\ordo_{6,fc}=&F^\ordo_6(\theta_1+i\pi,\theta_2+i\pi,\theta_3+i\pi,\theta_3',\theta_2',\theta_1')-\\
&-\frac{i}{\theta_1-\theta_1'}\left[
S(\theta_1-\theta_2)S(\theta_1-\theta_3)S(\theta_2'-\theta_1')S(\theta_3'-\theta_1')-1
\right]    F^\ordo_4(\theta_2+i\pi,\theta_3+i\pi,\theta_3',\theta_2')\\
&-\frac{i}{\theta_1-\theta_2'}\left[
S(\theta_1-\theta_2)S(\theta_1-\theta_3)S(\theta_3'-\theta_2')-S(\theta_2'-\theta_1')
\right]    F^\ordo_4(\theta_2+i\pi,\theta_3+i\pi,\theta_3',\theta_1')\\
&-\frac{i}{\theta_2-\theta_1'}\left[
S(\theta_2-\theta_3)S(\theta_2'-\theta_1')S(\theta_3'-\theta_1')-
S(\theta_1-\theta_2)
\right]    F^\ordo_4(\theta_1+i\pi,\theta_3+i\pi,\theta_3',\theta_2')\\
&-\frac{i}{\theta_2-\theta_2'}\left[
S(\theta_2-\theta_3)S(\theta_3'-\theta_2')-S(\theta_1-\theta_2)S(\theta_2'-\theta_1')
\right]    F^\ordo_4(\theta_1+i\pi,\theta_3+i\pi,\theta_3',\theta_1').
\end{split}
\end{equation*}
This object satisfies the exchange axioms and therefore it vanishes at
the degenerate point.

Expanding the pre-factors to first order
\begin{equation*}
\begin{split}
F^\ordo_{6,fc}=&F^\ordo_6(\theta_1+i\pi,\theta_2+i\pi,\theta_3+i\pi,\theta_3',\theta_2',\theta_1')\\
&+\frac{1}{\eps_1}\left[
\varphi_0 (\eps_1-\eps_2) +\varphi_{13}(\eps_1-\eps_3)\right] 
   F^\ordo_4(\theta_2+i\pi,\theta_3+i\pi,\theta_3',\theta_2')\\
&-\frac{1}{\theta_1-\theta_2'}\left[\varphi_{13}(\theta_1-\theta_2'-\eps_3)+
\varphi_0(\theta_1+\theta_1'-\theta_2-\theta_2')
\right]    F^\ordo_4(\theta_2+i\pi,\theta_3+i\pi,\theta_3',\theta_1')\\
&-\frac{1}{\theta_2-\theta_1'}\left[\varphi_{13}(\theta_2-\theta_1'-\eps_3)
-\varphi_0(\theta_1+\theta_1'-\theta_2-\theta_2')
\right]    F^\ordo_4(\theta_1+i\pi,\theta_3+i\pi,\theta_3',\theta_2')\\
&+\frac{1}{\eps_2}\left[\varphi_{13}(\eps_2-\eps_3)-
\varphi_0 (\eps_1-\eps_2)
\right]    F^\ordo_4(\theta_1+i\pi,\theta_3+i\pi,\theta_3',\theta_1').
\end{split}
\end{equation*}
We have to collect all those contributions which are non-singular in
the limit $\eps_{1,2,3}\to 0$. From the first and last line we get
(already assuming $\theta_2\to\theta_1$)
\begin{equation*}
2  (\varphi_{0}+\varphi_{13})F^\ordo_{4c}(\theta_1,\theta_3).
\end{equation*}
In the second and third line we use
\begin{equation*}
  \begin{split}
     F^\ordo_4(\theta_2+i\pi,\theta_3+i\pi,\theta_3',\theta_1')&=
F^\ordo_{4c}(\theta_1,\theta_3)+
\left(\frac{\theta_2-\theta_1'}{\eps_3}+\frac{\eps_3}{\theta_2-\theta_1'}\right)\varphi_{13}F^\ordo_{2c}\\
 F^\ordo_4(\theta_1+i\pi,\theta_3+i\pi,\theta_3',\theta_2')&=
F^\ordo_{4c}(\theta_1,\theta_3)+
\left(\frac{\theta_1-\theta_2'}{\eps_3}+\frac{\eps_3}{\theta_1-\theta_2'}\right)\varphi_{13}F^\ordo_{2c}.
  \end{split}
\end{equation*}
This way we obtain the contributions
\begin{equation*}
  -\left(2\varphi_{13}+2\varphi_0-\varphi_0
\left(\frac{\theta_2-\theta_1'}{\theta_1-\theta_2'}+\frac{\theta_1-\theta_2'}{\theta_2-\theta_1'}
\right)
\right)F^\ordo_{4c}(\theta_1,\theta_3)+
\left(\frac{\theta_2-\theta_1'}{\theta_1-\theta_2'}+\frac{\theta_1-\theta_2'}{\theta_2-\theta_1'}
\right) \varphi_{13}^2 F^\ordo_{2c}.
\end{equation*}
Taking $\theta_1'\to\theta_1$ and $\theta_2'\to\theta_2$
\begin{equation*}
  -\left(2\varphi_{13}+4\varphi_0
\right)F^\ordo_{4c}(\theta_1,\theta_3)-2
 \varphi_{13}^2 F^\ordo_{2c}.
\end{equation*}
Adding all the contributions we obtain the statement of the theorem.
\end{proof}

The higher particle case of two coinciding rapidities is given as
follows.
\begin{thm}
\label{thm23m}
 Let $A=\{2,3,\dots,m\}$. Then
\begin{equation}
\label{deg1}
  F_{c}^\ordo(1,1,A)=2\mathop{\sum_{A=A^+\cup A^-}}
F^\ordo_{c}(1,A^+) \sum_{s\in S_{1,1}(1,1,A^-)} [s]_\varphi.
\end{equation}
The sum in \eqref{deg1} runs over all bi-partite
partitions of the set $A$. For example in the case of $A=\{2,3\}$ we have
\begin{equation*}
\begin{split}
  F^\ordo_{8c}(1,1,2,3)=2\Big(
&F^\ordo_{6c}(1,2,3)[11]_\varphi+\\
&F^\ordo_{4c}(1,2)[131]_\varphi+
F^\ordo_{4c}(1,3)[121]_\varphi+\\
&F^\ordo_{2c}(1) ([1231]_\varphi+[1321]_\varphi)
\Big).
\end{split}
\end{equation*}
\end{thm}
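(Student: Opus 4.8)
The plan is to extend the fully-connected-form-factor computation used in the two preceding theorems to arbitrary $|A|$, organising the argument as an induction on the number of single rapidities in $A$, so that those two theorems serve as the base cases $|A|=0$ and $|A|=1$. Concretely, I would start from the $2(m+1)$-particle form factor in which the rapidity labelled $1$ is about to be doubled: its two bra-copies sit at $\theta_a+i\pi,\theta_b+i\pi$ and their ket-partners at $\theta_a',\theta_b'$, with $\theta_a,\theta_b,\theta_a',\theta_b'\to\theta_1$, while the remaining arguments approach the distinct single values $\theta_2,\dots,\theta_m$ of $A$. Since only rapidity $1$ is doubled, exactly four kinematical poles become dangerous in this limit, namely those at $\theta_a=\theta_a'$, $\theta_a=\theta_b'$, $\theta_b=\theta_a'$ and $\theta_b=\theta_b'$. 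Subtracting these four poles via \eqref{kinpole} produces a fully connected object $F^\ordo_{fc}$ which, as in the two theorems above, manifestly inherits the exchange axiom \eqref{exchange} and therefore vanishes at the fully degenerate point of the doubled rapidity.

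The next step is to expand the four subtracted prefactors to first order in the small rapidity differences, using $S(\eps)\approx -1-i\varphi_0\eps$ near coincidence and $S'(\theta)=i\varphi(\theta)S(\theta)$ in general, which converts the $S$-matrices into the elementary kernels $\varphi_0=\varphi(0)$ and $\varphi_{1k}$ attached to the two copies of the doubled rapidity. Each of the four resulting terms still carries an inner $2m$-particle form factor depending on the single rapidities of $A$ together with one bra-copy and one ket-partner of the doubled rapidity. Taking the diagonal limit of this inner factor in the single rapidities is where the set $A$ gets partitioned: by the standard diagonal-limit expansion of a form factor into its connected parts and kinematical-pole corrections (the same expansion applied to $F^\ordo_4$ in the proof of the second theorem), each single rapidity either stays attached to the operator, landing in $A^+$ and inside $F^\ordo_c(1,A^+)$, or its own kinematical pole is resolved and it is absorbed into a chain of $\varphi$ kernels, landing in $A^-$.

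Assembling the pieces should reproduce the claimed structure. The bipartite sum over $A=A^+\cup A^-$ comes from the two fates of each single rapidity, and the inner factor $\sum_{s\in S_{1,1}(1,1,A^-)}[s]_\varphi$ enumerates the distinct ways the $\varphi$-chain can start at one copy of the doubled rapidity, thread once through every element of $A^-$, and end at a copy of the doubled rapidity; the restriction that the first and last entries of $s$ equal $1$ encodes that the chain is anchored on the two doubled insertions, and the kernels $\varphi_0$ and $\varphi_{1k}$ produced by the first-order expansion are precisely the endpoint links $\varphi_{11}$ and $\varphi_{1k}$ of these sequences. The overall factor $2$ is inherited, as in the first theorem, from the interchangeability of the two copies of the doubled rapidity, i.e. from the two ways of assigning the chain endpoints to the two insertions.

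The hard part will be the combinatorial bookkeeping of this last assembly for general $m$: one must verify that, after the first-order expansion and the recursive use of the inner diagonal-limit expansion, all cross terms organise without remainder into the sequence sum $[s]_\varphi$ with the correct multiplicity, and that every spurious contribution of the type $\eps_j/\eps_k$ cancels so that the degenerate limit is genuinely finite. I would control this by tracking, for each partition $(A^+,A^-)$ and each sequence $s\in S_{1,1}(1,1,A^-)$, exactly which poles were subtracted and in which order the kernels chain, and I would test the resulting coefficients against the clustering property \eqref{eq:diagFF_cluster}: boosting a single element of $A$ to infinity must factorise both sides consistently, which fixes the induction and provides a stringent check on the emerging combinatorics. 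The displayed case $A=\{2,3\}$ after the statement would serve as the template for recognising the general pattern.
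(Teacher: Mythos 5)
Your proposal follows essentially the same route as the paper: the paper's own proof of this theorem is simply the remark that one repeats the steps of the three-particle case, i.e.\ subtracting the four kinematical poles attached to the doubled rapidity via \eqref{kinpole}, using the exchange axiom \eqref{exchange} to force the fully connected object to vanish at the degenerate point, expanding the $S$-matrix prefactors to first order, and letting the diagonal-limit expansion of the inner form factors generate the bipartition $A=A^+\cup A^-$ and the $\varphi$-chains. Your additional organisation as an induction on $|A|$ and the clustering cross-check go slightly beyond what the paper records, but they implement the identical mechanism.
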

\begin{proof}
  The proof can be given following the same steps as in the three-particle case
  presented above.
\end{proof}

The following theorem concerns the case when there are more than two
occurrences of $\theta_1$ in the form factor:
\begin{thm}
\label{thmegyesek}
Let $B=\{1^{(\times n)}\}$ with $n\ge 2$ and $A$ an
arbitrary multiset not including $1$. Then
\begin{equation}
\label{egyesek}
F^\ordo_{2(n+m),c}(B,A)=
n! \mathop{\sum_{A=A^+\cup A^-}}
F^\ordo_{c}(1,A^+) 
\sum_{s\in S_{1,1}(B,A^-)} [s]_\varphi,
\end{equation}
where $|A|=m$.

\end{thm}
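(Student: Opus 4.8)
The plan is to prove \eqref{egyesek} by induction on $n$, with Theorem~\ref{thm23m} serving as the base case $n=2$. For the inductive step I would introduce an auxiliary rapidity $\theta_0$, distinct from all the others, and apply the hypothesis for $n$ copies to the form factor $F^\ordo_c(\{1^{(\times n)}\},A,0)$, i.e.\ with the external multiset $A$ enlarged by the fresh label $0$. Because the connected diagonal form factors are meromorphic and finite on the diagonal, I may then take the limit $\theta_0\to\theta_1$: the left-hand side $F^\ordo_c(\{1^{(\times n)}\},A,0)$ tends to $F^\ordo_c(\{1^{(\times(n+1))}\},A)$, the object I want, while each term on the right-hand side has a well-defined limit. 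Justifying this termwise limit — in particular that merging $\theta_0$ into $\theta_1$ reproduces the genuine $(n{+}1)$-fold degenerate form factor independently of the order in which coincidences are taken — is the first point requiring care.

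Next I would organise the right-hand side according to whether the label $0$ falls in the operator-coupled set or in the polarisation set of each bipartition $\{A,0\}=\tilde A^+\cup\tilde A^-$. When $0\in\tilde A^-$ the factor $F^\ordo_c(1,\tilde A^+)$ stays regular and only the chains $[s]_\varphi$ are affected: the replacement $\varphi(\theta_0-\cdot)\to\varphi(\theta_1-\cdot)$ turns every sequence of $S_{1,1}(\{1^{(\times n)},0\},\tilde A^-\setminus\{0\})$ into a sequence of $S_{1,1}(\{1^{(\times(n+1))}\},\cdot)$ in which one interior $1$ is the former $0$. When $0\in\tilde A^+$ the factor $F^\ordo_c(1,\tilde A^+)$ itself develops a coincidence and becomes $F^\ordo_c(\{1,1\},\tilde A^+\setminus\{0\})$; this is exactly the two-fold degenerate case, so I would expand it again using Theorem~\ref{thm23m} (or Theorem~1 when the remaining set is empty). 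Collecting the coefficient of a fixed $F^\ordo_c(1,A^+)$ then reduces the whole step to two purely combinatorial statements about the weighted sequence sums.

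To control those sums I would first establish a block-decomposition lemma. Writing $\Sigma_p(C)=\sum_{s\in S_{1,1}(\{1^{(\times p)}\},C)}[s]_\varphi$ and cutting each sequence at its $p$ occurrences of $1$, the weight $[s]_\varphi$ factorises over the $p-1$ gaps (each shared $1$ contributes its left factor to one block and its right factor to the next), so that $\Sigma_p(C)=\sum\prod_{i=1}^{p-1}\Sigma_2(C^{(i)})$, summed over all ways of distributing $C$ into an ordered list of $p-1$ possibly empty parts $C^{(1)},\dots,C^{(p-1)}$, with the empty arc weighted by $\Sigma_2(\emptyset)=\varphi_0$. In other words $\Sigma_p$ is the $(p{-}1)$-fold convolution of the single-arc weight $\Sigma_2$. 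The two needed identities then follow: associativity of this convolution gives $\Sigma_{n+1}(C)=\sum_{C=C_1\cup C_2}\Sigma_2(C_1)\,\Sigma_n(C_2)$, which governs the $0\in\tilde A^+$ contribution and yields a factor $2\,n!$; and, in $\Sigma_n(C\cup\{0\})$, setting $\theta_0\to\theta_1$ splits the arc containing $0$ into two, so that each target sequence arises from exactly the $n-1$ interior $1$'s that could have been the former $0$, giving $\Sigma_n(C\cup\{0\})\big|_{0\to1}=(n-1)\,\Sigma_{n+1}(C)$ and hence a factor $n!(n-1)$ for the $0\in\tilde A^-$ contribution. Adding the two pieces produces the coefficient $n!(n-1)+2\,n!=(n+1)!$ in front of $F^\ordo_c(1,A^+)$, which is precisely \eqref{egyesek} at level $n+1$.

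The main obstacle is the bookkeeping in this last step. One must verify that the reindexing of bipartitions in the $0\in\tilde A^+$ case — where Theorem~\ref{thm23m} further splits the operator-coupled set into $B^+\cup B^-$ — matches the convolution of $\Sigma_2$ and $\Sigma_n$ exactly, with no leftover or double-counted terms, and that the arc-splitting count in the $0\in\tilde A^-$ case really gives the uniform multiplicity $n-1$ for every target sequence. The factorisation of $[s]_\varphi$ across the $1$'s, which underlies the whole convolution picture, must be checked carefully, since it is what makes the two contributions recombine into a single $(n{+}1)$-fold sum. An alternative, more computational route would bypass the induction and imitate the three-particle proof directly: build the fully connected $2(n+m)$-particle form factor, expand every $S$-matrix prefactor to linear order via $S(\eps)\approx-1-i\varphi_0\eps$, and collect the regular part at the fully degenerate point, where the same convolution structure would emerge from the combinatorics of the kinematical-pole subtractions — at the cost of a considerably heavier calculation.
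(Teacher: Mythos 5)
Your proof is correct, and it arrives at \eqref{egyesek} by a genuinely different bookkeeping than the paper's. The paper attaches labels $1_a,1_b,\dots$ to the $n$ coinciding rapidities, applies the two-coincidence formula \eqref{deg1} repeatedly (without its factor of $2$, the labelled endpoints now being distinguishable), obtains in the end the sum over all labelled paths beginning and ending at one of the $1$'s, and produces the $n!$ in one stroke by stripping the labels, since each unlabelled sequence in $S_{1,1}(B,A^-)$ lifts to exactly $n!$ labelled ones. You instead run an explicit induction: adjoin a fresh spectator $\theta_0$, apply the level-$n$ statement to $F^\ordo_{c}(\{1^{(\times n)}\},A,0)$, send $\theta_0\to\theta_1$, and split according to whether $0$ landed in a $\varphi$-chain or inside a form factor. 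Your two combinatorial lemmas are sound: the factorisation of $[s]_\varphi$ into arcs cut at the occurrences of $1$ (each $\varphi$-factor is an edge lying in exactly one arc, so the sum over $S_{1,1}(\{1^{(\times p)}\},C)$ is a $(p-1)$-fold convolution of the single-arc sum, with $\varphi_0$ as the empty-arc weight), and the multiplicity count $n-1$ (a target sequence with $n+1$ ones has $n-1$ interior ones that could have been the former $0$). Together they give the coefficient recursion $n!(n-1)+2\,n!=(n+1)!$, which I verified and which correctly reproduces \eqref{teljesendegeneralt} when $A=\emptyset$. The paper's route buys brevity, the $n!$ appearing as a pure symmetry factor with no coefficient recursion; yours buys explicitness, every step being a finite checkable identity, and the arc-gluing multiplicativity you isolate is essentially the same structure as \eqref{furaszorzat}, which the paper only introduces later for the resummation. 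Finally, the caveat you flag — that merging rapidities pairwise, in any order, reproduces the fully degenerate form factor — is a genuine assumption, but it underlies the paper's repeated use of \eqref{deg1} in exactly the same way, so your argument sits at the same level of rigour as the original.
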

\begin{proof}
For technical reasons introduce new labels to the first $n$ rapidities
as $(1_a,1_b,\dots)$. The degenerate form factor can be obtained by
repeated use of eq. \eqref{deg1}. Here for technical reasons we
distinguish the first rapidities too, therefore \eqref{deg1} can be
written down without a factor of 2, extending the summation over all
possible paths starting and ending with either of the 1's. Then the
repeated use of Theorem \ref{thm23m} results in all
possible paths starting and ending with one of the 1's. Removing the
labels at the end results in a factor of $n!$.
\end{proof}

As a special case of the above theorem we obtain the result for
the completely degenerate case:
\begin{equation}
\label{teljesendegeneralt}
  F^\ordo_{2n,c}(1,1,\dots,1)=
 n!  \varphi(0)^{n-1}  F^\ordo_{2,c}.
\end{equation}

\subsubsection{Multiple copies of two rapidities}

\begin{thm}
\label{csak-kettesek}
Let $A=\{1^{(\times n)},2^{(\times m)}\}$ with $n,m\ge 2$.
Then
  \begin{equation*}
    F^\ordo_{2(n+m),c}(A)=n!m!\left(
\frac{F^\ordo_{4c}(1,2)}{\varphi_{12}}\sum_{s\in S_{1,2}(A)} [s]_\varphi+
F^\ordo_{2c}(1) \sum_{s\in S_{1,1}(A)} [s]_\varphi+
F^\ordo_{2c}(2)\sum_{s\in S_{2,2}(A)} [s]_\varphi\right).
  \end{equation*}
\end{thm}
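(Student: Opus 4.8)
My plan is to reduce the statement to the already established Theorems~\ref{thm23m} and~\ref{thmegyesek} by peeling off the two groups of coincident rapidities one at a time, and to keep the combinatorics under control by phrasing the result as a sum over labelled Hamiltonian paths on the $n+m$ rapidities. First I would treat the $m$ copies of $\theta_2$ as formally distinct auxiliary rapidities and apply Theorem~\ref{thmegyesek} with $B=\{1^{(\times n)}\}$. This writes $F^\ordo_{2(n+m),c}(A)$ as a sum over subsets $A^+$ of the $2$'s, each attached to a form factor $F^\ordo_c(1,A^+)$ and weighted by the connecting path sums $\sum_{s\in S_{1,1}(\{1^{(\times n)}\}\cup A^-)}[s]_\varphi$ over the complementary set $A^-$. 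The subsets with $|A^+|=0$ already produce the genuine form factor $F^\ordo_{2c}(1)$, and $|A^+|=1$ produces $F^\ordo_c(1,\{2\})=F^\ordo_{4c}(1,2)$ directly.

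Next, since every residual form factor $F^\ordo_c(1,\{2^{(\times p)}\})$ with $p\ge 2$ still contains coincident $\theta_2$'s, I would apply Theorem~\ref{thmegyesek} a second time with the roles of the two rapidities exchanged, taking $B$ to be the copies of $\theta_2$ and $\{1\}$ as the auxiliary set. Exactly two outcomes are possible: the lone $1$ is absorbed into the $2$-core, giving $F^\ordo_c(2,1)=F^\ordo_{4c}(1,2)$ with an $S_{2,2}$ path sum, or it is not, giving $F^\ordo_{2c}(2)$ with a longer $S_{2,2}$ path sum. Hence the only genuine form factors surviving the full reduction are $F^\ordo_{2c}(1)$, $F^\ordo_{2c}(2)$ and $F^\ordo_{4c}(1,2)$, which is forced by the fact that with only two distinct rapidities a connected core can hold at most one copy of each. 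I would then collect the three classes and show that the nested products of $\varphi$ factors reassemble into the single path sums $\sum_{s\in S_{1,1}(A)}$, $\sum_{s\in S_{2,2}(A)}$ and $\sum_{s\in S_{1,2}(A)}$ over the full multiset $A$, the mixed sum appearing divided by $\varphi_{12}$: every $1$-to-$2$ path carries at least one explicit factor $\varphi_{12}$, and dividing by one such factor is precisely what is needed so that the $1$-$2$ adjacency used up in assembling the core $F^\ordo_{4c}(1,2)$ is not double counted as an edge of the connecting path.

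The step I expect to be the main obstacle is this combinatorial reorganization together with the correct symmetry factor. The naive un-labelled iteration is misleading: reducing the $1$'s first and setting $A^+=\emptyset$ reproduces $\sum_{s\in S_{1,1}(A)}[s]_\varphi$ with the \emph{wrong} prefactor $n!$ instead of the required $n!m!$. The resolution, which I would use throughout, is to keep both groups of coincident rapidities labelled while applying Theorem~\ref{thmegyesek} (legitimate, since the underlying reduction theorems are really statements about distinct rapidities that are specialised afterwards), so that the auxiliary $2$'s contribute an extra $(m-p)!$ from labelling $A^-$; combined with the $\binom{m}{p}$ choices of $A^+$ this yields the effective multiplicity $\binom{m}{p}(m-p)!=m!/p!$ at each order, and the final un-labelling of both groups produces the symmetric factor $n!m!$. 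Equivalently, the cleanest bookkeeping treats the whole object as a sum over labelled Hamiltonian paths whose weight is $[s]_\varphi$ times the form factor fixed by the path's endpoints, $F^\ordo_{4c}(1,2)/\varphi_{12}$, $F^\ordo_{2c}(1)$ or $F^\ordo_{2c}(2)$; grouping the $n!m!$ labellings of each multiset path then gives the stated formula at once. I would verify the bookkeeping explicitly in the first nontrivial case $n=m=2$, where the three path sums evaluate to $\varphi_0\varphi_{12}^2$, $\varphi_0\varphi_{12}^2$ and $\varphi_{12}(\varphi_0^2+\varphi_{12}^2)$ and the cancellation of $\varphi_{12}$ can be checked by hand, and then establish the general pattern by induction on $n+m$ with Theorem~\ref{thmegyesek} as the inductive step.
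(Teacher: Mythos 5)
Your proposal is correct and follows essentially the same route as the paper's proof: both iterate Theorem~\ref{thmegyesek} to peel off the two groups of coincident rapidities, keep the rapidities labelled so that un-labelling at the end produces the $n!m!$ prefactor, and reassemble the nested path sums through the gluing identities $[1B1]_\varphi[2C2]_\varphi=[1B12C2]_\varphi/\varphi_{12}$ and $[1B1]_\varphi=[1B12]_\varphi/\varphi_{12}$. The only minor deviation is that the paper obtains the coefficient of $F^\ordo_{2c}(2)$ by running the two reduction steps in the opposite order (2's first, then 1's), exploiting the $1\leftrightarrow 2$ symmetry, whereas you extract all three coefficients from a single one-directional expansion, which costs you the extra insertion-type gluing (splicing the full $S_{1,1}$ path into the lone $1$ of the inner $S_{2,2}$ path) that you allude to but do not spell out; both routes work.
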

\begin{proof}
  There are two ways to calculate this diagonal case. We can apply Theorem
  \ref{thmegyesek} first to the 1's, then to the 2's, or the other way
  around. Performing the steps in the first way, it is easy
  to see that the only contributions to the coefficient of $F^\ordo_{2c}(1)$
  are those given above. The coefficient of $F^\ordo_{2c}(2)$ can be
  obtained by performing the two steps in the second way. 

In order to obtain the coefficient of $F^\ordo_{4c}(1,2)$ 
we attach extra labels to the numbers as
  $\{1_a,1_b,\dots,2_a,2_b,\dots\}$. We obtain terms of the
  form $[1B1]_\varphi[2C2]_\varphi$, where
  $A=\{1,1,2,2,B,C\}$ and $C$ consists only of 2's ($B$ may include both
  1's and 2's) and terms of the form $[1B1]_\varphi$ such that
  $A=\{1,1,2,B\}$. 
For the terms in the first case 
we use the identity
  \begin{equation*}
    [1B1]_\varphi[2C2]_\varphi=\frac{[1B12C2]_\varphi}{\varphi_{12}},
  \end{equation*}
whereas in the second case we have
  \begin{equation*}
    [1B1]_\varphi=\frac{[1B12]_\varphi}{\varphi_{12}}.
  \end{equation*}
This way we obtain a summation over all paths starting with one of the
1's and ending with one of the 2's. Finally removing the labels we obtain a factor of $n!m!$.
\end{proof}

Finally we consider the case when there are a multiple  1's
and 2's and an arbitrary number of other rapidities. 
\begin{thm}
\label{egyesekkettesek}
  Let $A=\{1^{(\times n)},2^{(\times m)},B\}$ 
with $B=\{3,4,\dots,k+2\}$ such that $|B|=k$. The degenerate form factor
  is equal to
\begin{equation}
\begin{split}
F^\ordo_{2(n+m+o),c}(A)=&n!m!
\sum_{B=B^+\cup B^-}
\Big(
\frac{F^\ordo_c(1,2,B^+)}{\varphi_{12}} \sum_{s\in \tilde S_{1,2}(A\setminus B^+)} [s]_\varphi+\\
&\hspace{2cm}+ F^\ordo_c(1,B^+) \sum_{s\in S_{1,1}(A\setminus B^+)} [s]_\varphi+
F^\ordo_c(2,B^+) \sum_{s\in S_{2,2}(A\setminus B^+)} [s]_\varphi
\Big),
\end{split}
\end{equation}
where $\tilde S_{1,2}(A)$ is the sum of all paths consisting of the
elements of $A$, which start with 1, end with 2, and the first number
after the rightmost 1 is 2. For example 
\begin{equation*}
  \tilde S_{1,2}(\{1,1,2,2,3\})=\{
(12312),
(13122),
(13212),
(11232)
\}.
\end{equation*}
\end{thm}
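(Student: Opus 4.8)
The plan is to follow the strategy used for Theorem~\ref{csak-kettesek} essentially verbatim, now transporting the spectator block $B$ through the computation, and to apply Theorem~\ref{thmegyesek} twice. First I would attach labels to the repeated rapidities, writing the copies of $1$ as $1_a,1_b,\dots$ and the copies of $2$ as $2_a,2_b,\dots$, keeping the distinct elements of $B$ as they are. Once every argument is formally distinct, Theorem~\ref{thmegyesek} applies to the labelled copies without its combinatorial prefactor (the path sums then ranging over labelled sequences), and the overall factor $n!m!$ is recovered at the very end upon erasing the labels.

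Next I would degenerate the $1$'s, taking $\{2^{(\times m)},B\}$ as the arbitrary multiset of Theorem~\ref{thmegyesek}. This writes the form factor as a sum over bipartite partitions of $\{2^{(\times m)},B\}$ into a part absorbed into a less degenerate form factor $F^\ordo_c(1,\cdot)$ and a complementary part that enters a $1$-anchored path sum of type $S_{1,1}$. I would then degenerate the $2$'s still present inside each surviving $F^\ordo_c(1,\cdot)$ by a second application of Theorem~\ref{thmegyesek}, taking $\{1\}$ together with the $B$-elements retained so far as the arbitrary rest, which produces $2$-anchored path sums of type $S_{2,2}$. Depending on which of $1,2$ survive as genuine arguments of the innermost non-degenerate form factor, three channels emerge: $F^\ordo_c(1,2,B^+)$ when both survive, $F^\ordo_c(1,B^+)$ when only the $1$ does, and---by running the same two steps with $1$ and $2$ interchanged---$F^\ordo_c(2,B^+)$ when only the $2$ does. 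Here $B^+$ is the subset of spectators retained in the form factor and $B^-=B\setminus B^+$ those pushed into the paths, which reproduces the outer partition sum $\sum_{B=B^+\cup B^-}$.

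The final step combines the two families of path sums in the mixed channel. As in Theorem~\ref{csak-kettesek} I would use
\begin{equation*}
[1B'1]_\varphi\,[2C'2]_\varphi=\frac{[1B'12C'2]_\varphi}{\varphi_{12}},
\qquad
[1B'1]_\varphi=\frac{[1B'12]_\varphi}{\varphi_{12}},
\end{equation*}
to fuse a $1$-anchored and a $2$-anchored path (or, when only a single $2$ survives the first degeneration so that no second degeneration is needed, a lone $1$-anchored path) into one path running from a $1$ to a $2$, at the cost of the prefactor $1/\varphi_{12}$ sitting in front of $F^\ordo_c(1,2,B^+)$. In the two pure channels no fusion is needed and the path sums remain $S_{1,1}$ and $S_{2,2}$ over $A\setminus B^+$, exactly as stated.

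I expect the main obstacle to be the bookkeeping that yields the restricted set $\tilde S_{1,2}$ rather than the full $S_{1,2}$ in the mixed channel. The key point is that the first degeneration collects every $1$ appearing in a path into the outer $1$-anchored path, so the inner $2$-anchored path carries no $1$'s; the fusion identities then place the junction $2$ immediately after the terminal $1$ of the outer path, which is therefore the rightmost $1$ of the fused sequence. Hence every path produced has a $2$ as the first element after its rightmost $1$, which is precisely the defining condition of $\tilde S_{1,2}(A\setminus B^+)$, while the spectators of $B^-$ are free to occupy any position except that one distinguished slot. What still requires care is checking that each admissible path arises exactly once and that, when $B^+$ exhausts $B$ so that $B^-=\varnothing$, the constraint becomes vacuous and $\tilde S_{1,2}$ collapses to $S_{1,2}$; this recovers Theorem~\ref{csak-kettesek} as the special case $B=\varnothing$ and provides a useful consistency check.
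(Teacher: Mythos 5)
Your proposal is correct and follows essentially the same route as the paper's proof: a double application of Theorem~\ref{thmegyesek} (first to the $1$'s, then to the $2$'s, and the interchanged order for the $F^\ordo_c(2,B^+)$ channel), with labelled copies to recover the $n!m!$ prefactor and the fusion identities $[1C1]_\varphi[2D2]_\varphi=[1C12D2]_\varphi/\varphi_{12}$, $[1C1]_\varphi=[1C12]_\varphi/\varphi_{12}$ producing the mixed channel. Your explicit explanation of why the restricted set $\tilde S_{1,2}$ arises (the inner $2$-anchored path contains no $1$'s, so the junction $2$ sits immediately after the rightmost $1$) is a point the paper leaves implicit, but it is the same argument, not a different one.
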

\begin{proof}
 The theorem  can be proven along the lines of the previous proof. The
 coefficients of the form factors $F_c^\ordo(1,B^+)$ and $F^\ordo_c(2,B^+)$ are
 easily obtained by performing the steps of theorem \ref{thmegyesek}
 first for the 1's, then for the 2's, or the other way around,
 respectively.

In order to obtain the coefficients of the form factors $F^\ordo_c(1,2,B^+)$
we first perform theorem \ref{thmegyesek} for the 1's and then for the
2's. This way we obtain terms of the form $[1C1]_\varphi[2D2]_\varphi$
with $A\setminus B^+=(1,1,2,2,C,D)$ such that $D$ contains no 1, 
and terms of the form $[1C1]_\varphi$ with $A\setminus
B^+=(1,1,2,C)$. Using again the identities 
  \begin{equation*}
    [1C1]_\varphi[2D2]_\varphi=\frac{[1C12D2]_\varphi}{\varphi_{12}},
\qquad
    [1C1]_\varphi=\frac{[1C12]_\varphi}{\varphi_{12}}
  \end{equation*}
we obtain the desired statement.
\end{proof}

\section{One-particle expectation values}

\label{sec:1pfinal}

In this section we calculate the one-particle mean value
\eqref{LM1pkifejt}-\eqref{LM1pkifejt2} using the results of the
previous section for the degenerate form factors.

It follows from \eqref{fftcsa2-result2} that the asymptotic result is
\begin{equation}
\label{1p-asympt}
  \bra{\bar\theta}\ordo\ket{\bar\theta}=\frac{F^\ordo_{2,c}}{mL\cosh\bar\theta}+\vev{\ordo}.
\end{equation}
It is instructive to first obtain this asymptotic formula,
neglecting all exponential corrections.
This is presented in the following subsection. 
The exact
result with all terms included is calculated in subsection \ref{sec:1pb}.

\subsection{All orders in $1/L$}

In the asymptotic approximation we only keep the contributions with $k=0$
from 
\eqref{LM1pkifejt}-\eqref{LM1pkifejt2}. These are the terms which do not contain the weight
function
$1/(1+e^{\eps(\theta)})\sim e^{-mL}$: 
\begin{equation}
\label{1pasff}
\bra{\bar\theta}\mathcal{O}\ket{\bar\theta}=\vev{\mathcal{O}}+
 \sum_{j=1}^\infty \frac{1}{j!}
\frac{1}{n_1^j}
F^\ordo_{2j,c}(\bar\theta,\bar\theta,\dots,\bar\theta).
\end{equation}
In this approximation $n_1$ is
\begin{equation*}
n_1=mL\cosh\bar\theta+\varphi(0)+\ordo(e^{-mR}).
\end{equation*}
The relation \eqref{teljesendegeneralt}  is used to sum up
the second term in \eqref{1pasff} 
 as
\begin{equation}
\label{1pasf}
\begin{split}
&   \sum_{j=1}^\infty \frac{1}{j!}
\left( \frac{1}{mL\cosh\bar\theta+\varphi(0)}
 \right)^j
F^\ordo_{2j,c}(\bar\theta,\bar\theta,\dots,\bar\theta)=\\
&\hspace{3cm}=\frac{F^\ordo_{2,c}}{mL\cosh\bar\theta+\varphi(0)}
\sum_{j=1}^\infty  \left( \frac{\varphi(0)}{mL\cosh\bar\theta+\varphi(0)}
 \right)^{j-1}\\
&\hspace{3cm}=\frac{F^\ordo_{2,c}}{mL\cosh\bar\theta}.
\end{split}
\end{equation}
Putting together \eqref{1pasff} and \eqref{1pasf} we obtain indeed \eqref{1p-asympt}.
It is
interesting  that a resummation of an infinite number of terms was
required to obtain this simple formula.

\subsection{Exponential corrections}

\label{sec:1pb}

Here we consider all terms in
\eqref{LM1pkifejt}-\eqref{LM1pkifejt2}. The notations used in the
following calculation are given at the beginning of subsection \ref{sec:degenerate}.

A given $\mathcal{L}_{jk}$ can be
evaluated using the statement \eqref{egyesek} for the degenerate form factors as
\begin{equation*}
\begin{split}
&   \mathcal{L}_{jk}=\frac{1}{k!} \int \widetilde{d\theta_2}\dots \widetilde{d\theta_{k+1}}\
n_1^{-j}\times \\
&\hspace{1cm} 
\sum_{A_k=A^-\cup A^+}
F^\ordo_{2(1+|A^+|),c}(\bt+i\pi/2,A^+) 
\sum_{s\in S_{11}(1^{(\times j)},A^-)} [s]_\varphi,
\end{split}
\end{equation*}
where we assumed $j>1$, $A_k=\{2,3,\dots,k+1\}$ and we identified
$\theta_1=\bar\theta+i\pi/2$. 

Summing up all terms, grouping them according to the subset $A^+$ and using
the symmetry in the variables the mean value can be cast in the form
\begin{equation}
\label{egyfajtakifejtes}
    \bra{\bar\theta}\ordo \ket{\bar\theta}=
\vev{\ordo}_\eps+
\frac{\sum_{l=0}^\infty
\frac{1}{l!}\int \widetilde{d\theta_2}\dots \widetilde{d\theta_{l+1}} \
F^\ordo_{2(1+l),c}(\bt+i\pi/2,\theta_2,\dots,\theta_{l+1})}{N_1}
\end{equation}
with
\begin{equation}
\label{LM2}
    \vev{\ordo}_{\eps}=\sum_n \frac{1}{n!}
\int \frac{d\theta_1}{2\pi}\dots  \frac{d\theta_n}{2\pi}
\left(\prod_j \frac{1}{1+e^{\eps(\theta_j)}}\right)
F^\ordo_{2n,c}(\theta_1,\dots,\theta_n),
\end{equation}
and
\begin{equation}
\label{norma1}
  \frac{1}{N_1}=
\frac{1}{n_1}+\sum_{j=2}^\infty\sum_{l=0}^\infty
n_1^{-j} \frac{1}{l!}\int 
\widetilde{d\theta_2}\dots \widetilde{d\theta_{l+1}}\
\sum_{s\in S_{11}(1^{(\times j)},2,3,\dots,l+1)} [s]_\varphi.
\end{equation}
Here the first term $1/n_1$ comes from those contributions where there
is only a single insertion of $\bar\theta+i\pi/2$. The infinite sum
comes from the terms with multiple insertions.

On a
sequence $s$ we define the function
\begin{equation}
\label{segy}
  [s]_1=n_1^{-(m(1,s)-1)} \int \widetilde{d\theta_2}\dots \widetilde{d\theta}_{l_s-m(1,s)+1} \
[s]_\varphi.
\end{equation}
Here $l_s$ denotes the total length of the sequence and $m(1,s)$ denotes
the multiplicity of 1 in $s$. 
For example
\begin{equation*}
  [11231]_1=\frac{1}{n_1^2} \int \widetilde{d\theta_2} \widetilde{d\theta_3}\
  \varphi_{11} \varphi_{12}\varphi_{23}\varphi_{31}.
\end{equation*}
It is easy to see that under multiplication this function behaves as
\begin{equation}
\label{furaszorzat}
  [1A1]_1\times [1B1]_1=[1A1B1]_1,
\end{equation}
where $A$ and $B$ are arbitrary finite sequences. 

Let $P_{1,1}$ be the set of finite sequences with the following
properties: $s\in P_{1,1}$ if 
\begin{enumerate}
\item Either $s=(1^{(\times j)})$ with some $j\ge 2$ or
$s$ is a permutation of the sequence
\begin{equation*}
  (1^{(\times j)},2,3,\dots,k)
\end{equation*}
with some $j,k\ge 2$
\item $s$ starts and ends with 1.
\item When numbers other than 1 are present, they appear in increasing
  order.
\end{enumerate}
The first few examples are given below:
\begin{equation*}
  \begin{split}
P_{1,1}=\{&
(11),(111),(121),(1111),(1121),(1211),(1231),\\
&(11111),(12111),(11211),(11121),(11231),(12131),(12311),(12341),
\dots
\}.
  \end{split}
\end{equation*}

With these notations \eqref{norma1} can be written as
\begin{equation}
\label{norma2}
  \frac{n_1}{N_1}=1+\sum_{s\in P_{1,1}}  [s]_1.
\end{equation}
Note that the property 3 in the definition of $P_{1,1}$ was needed to
remove a factor of $1/l!$ from the expression \eqref{norma1}. 

Equation
\eqref{norma2} presents an explicit representation of $N_1$, however
it is possible to derive a simpler form. Let $T_{1,1}$ be the subset
of $P_{1,1}$ containing only those 
sequences in which 1 appears only twice, at the beginning and at the
end: 
\begin{equation*}
  T_{1,1}=\{(11),(121),(1231),(12341),\dots\}.
\end{equation*}
It can be considered the set of the ``connected'' components of $P_{1,1}$.

\begin{thm}
Inverting \eqref{norma1} gives
\begin{equation}
\label{allitas}
 \frac{N_1}{n_1}=1-\sum_{s\in T_{1,1}} [s]_1.
\end{equation}
\end{thm}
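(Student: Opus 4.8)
The plan is to prove the equivalent multiplicative identity
\[
\Big(1+\sum_{s\in P_{1,1}}[s]_1\Big)\Big(1-\sum_{t\in T_{1,1}}[t]_1\Big)=1,
\]
which, combined with the representation \eqref{norma2} of $n_1/N_1$, yields \eqref{allitas} upon division. Concretely, I would set $X=\sum_{t\in T_{1,1}}[t]_1$ and expand the geometric series
\[
\frac{1}{1-X}=1+\sum_{r=1}^{\infty}X^{r}=1+\sum_{r=1}^{\infty}\ \sum_{t_1,\dots,t_r\in T_{1,1}}[t_1]_1[t_2]_1\cdots[t_r]_1,
\]
and then show that the right-hand side coincides with $1+\sum_{s\in P_{1,1}}[s]_1=n_1/N_1$. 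Once this is done, inverting gives $N_1/n_1=1-X=1-\sum_{t\in T_{1,1}}[t]_1$, which is exactly \eqref{allitas}.

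The combinatorial heart of the matching is the multiplication rule \eqref{furaszorzat}, $[1A1]_1\times[1B1]_1=[1A1B1]_1$. Iterating it collapses each ordered tuple $(t_1,\dots,t_r)\in (T_{1,1})^{r}$ into a single object $[t_1*t_2*\cdots*t_r]_1$, where $*$ glues consecutive factors by identifying the trailing $1$ of one with the leading $1$ of the next. Since every $t_i\in T_{1,1}$ has the shape $(1,\underbrace{\ast,\dots,\ast}_{m_i},1)$ with $m_i\ge 0$ non-$1$ entries and \emph{no} internal $1$, the glued sequence is a member of $P_{1,1}$ whose $1$'s are separated by blocks of sizes $m_1,\dots,m_r$. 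I would then argue that this is a bijection: any $s\in P_{1,1}$ has some number $p\ge 2$ of $1$'s, and cutting $s$ at each of its $p-2$ internal $1$'s recovers a \emph{unique} ordered list of $r=p-1$ blocks, each lying in $T_{1,1}$ (a block with no internal $1$ is automatically of the form $1\ast^{m}1$). Hence the double sum over $r\ge 1$ and over tuples reproduces $\sum_{s\in P_{1,1}}[s]_1$ term by term, so that $1+\sum_{P_{1,1}}[s]_1=1+\sum_{r\ge1}X^{r}=(1-X)^{-1}$.

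Two points require care, and the second is the main obstacle. First, the labels $2,3,\dots$ on the non-$1$ entries are dummy integration variables, so $[s]_1$ depends only on the underlying pattern of $1$'s and $\ast$'s; property~3 in the definition of $P_{1,1}$ (increasing order) simply fixes one canonical representative per pattern, and it is precisely this convention---the same one already used to strip the $1/l!$ in passing from \eqref{norma1} to \eqref{norma2}---that makes the cut-at-internal-$1$'s map land in $P_{1,1}$ with multiplicity one. I would also verify that the powers of $n_1$ and the integrations match: by \eqref{segy} a factor $t_i$ carries $n_1^{-(m(1,t_i)-1)}=n_1^{-1}$ and an integration over its $m_i$ variables, so the product carries $n_1^{-r}=n_1^{-(p-1)}$ together with $\sum_i m_i$ integrations, exactly matching $[s]_1$ with $m(1,s)=p$; this consistency is guaranteed by \eqref{furaszorzat}. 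Second, the geometric resummation presupposes $|X|<1$. This is the genuinely analytic step, but it holds for large $L$: one has $[11]_1=\varphi_0/n_1=\ordo(1/L)$ with $n_1\sim mL\cosh\bar\theta$, while every longer $t\in T_{1,1}$ carries at least one weight factor $1/(1+e^{\eps})\sim e^{-mL}$, so $X$ is small and the series converges. With the bijection and this bound in hand, the identity $1+\sum_{P_{1,1}}[s]_1=(1-X)^{-1}$ is established and \eqref{allitas} follows.
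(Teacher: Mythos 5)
Your proof is correct and follows essentially the same route as the paper: both arguments rest on the multiplication rule \eqref{furaszorzat} together with the unique decomposition of a sequence in $P_{1,1}$ into blocks from $T_{1,1}$ obtained by cutting at its internal $1$'s. The only difference is packaging --- the paper peels off just the first block, which yields the one-step identity $\sum_{s\in P_{1,1}}[s]_1-\sum_{s\in T_{1,1}}[s]_1=\left(\sum_{s\in T_{1,1}}[s]_1\right)\left(\sum_{s\in P_{1,1}}[s]_1\right)$, equivalent to $\left(1+\sum_{s\in P_{1,1}}[s]_1\right)\left(1-\sum_{s\in T_{1,1}}[s]_1\right)=1$ with no geometric series and hence no separate convergence discussion, whereas your complete factorization into $r$ blocks resums $\sum_{r\ge 1}X^r$ and therefore needs the bound $|X|<1$, which you correctly supply for large $L$.
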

\begin{proof}
It needs to be proven that
\begin{equation*}
\sum_{s\in P_{1,1}} [s]_1- \sum_{s\in T_{1,1}} [s]_1=
\left(\sum_{s\in T_{1,1}} [s]_1\right) 
\left(\sum_{s\in P_{1,1}} [s]_1\right).
\end{equation*}
In this form the l.h.s. consists of the sum of all allowed sequences in which
the number 1 is present at least 3 times.
Given an arbitrary such sequence $s$ it is possible to
reconstruct it as $s=1A1B1$ such that $(1A1)\in T_{1,1}$ and $(1B1)\in
P_{1,1}$. This provides a bijection between the elements on the
l.h.s. and all possible products on the 
r.h.s. and using \eqref{furaszorzat} we obtain a complete equality of
the two sides.
\end{proof}

Explicitly expanding \eqref{allitas} we obtain
\begin{equation*}
\begin{split}
&  N_1=
n_1-\varphi(0)
-\int \widetilde{d\theta_2}\ \varphi(\theta_2-\bar\theta-i\pi/2)^2\\
&\hspace{1cm}-\int \widetilde{d\theta_2}\widetilde{d\theta_3}\
\varphi(\theta_2-\bar\theta-i\pi/2)\varphi(\theta_2-\theta_3)
\varphi(\theta_3-\bar\theta-i\pi/2)-\dots
\end{split}
\end{equation*}
Substituting the explicit representation \eqref{n1_szepen_kifejezve} 
for $n_1$ results in
\begin{equation}
\begin{split}
\label{norm_factor_szepen_kifejezve}
& 
N_1=
mL\Big(\cosh\bar\theta
+\int  \widetilde{d\theta} \
i\varphi(\theta-\bar{\theta}+i\pi/2)\sinh\theta\\
&\qquad
+\sum_{n=2}^\infty \int \widetilde{d\theta_1}\dots
 \widetilde{d\theta_n}\
i\varphi(\theta_1-\bar{\theta}+i\pi/2)\varphi(\theta_1-\theta_2)\dots
\varphi(\theta_{n-1}-\theta_n) \sinh\theta_n
\Big).
\end{split}
\end{equation}
This is our final formula for the normalization factor in \eqref{egyfajtakifejtes}.

\subsection{Interpretation of the result}

Consider the function $\bar Q(\bar \theta)$ defined in
\eqref{norm_fact}. Taking the total derivative with respect to
$\bar\theta$ 
\begin{equation}
\label{Q1def1}
\frac{d \bar Q}{d\bar\theta}=
mL\cosh\bar\theta
+\int  \frac{d\theta'}{2\pi}i\varphi(\theta'-\bar{\theta}+i\pi/2)
\frac{1}{1+e^{\eps(\theta')}}
\left(\frac{\partial\eps}{\partial\theta
  }+\frac{\partial\eps}{\partial\bar \theta }\right).
\end{equation}
Here it is understood that the pseudoenergy function $\eps(\theta)$
also depends on $\bar\theta$ through the source term in \eqref{eq:excited_eps_sinhG}.

Differentiating \eqref{eq:excited_eps_BY} 
\begin{align}
\label{dddd}
 (1-\hat{K})
\left(\frac{\partial\eps}{\partial\theta }+\frac{\partial\eps}{\partial\bar\theta }\right)
=mL\sinh\theta
\end{align}
with $K(\theta,\theta')=\varphi(\theta-\theta')$. Inverting the
integral equation and substituting into \eqref{Q1def1}
\begin{equation}
\label{Qdef2}
\begin{split}
 \frac{\partial \bar Q}{\partial \bar\theta}=&
mL\Big(\cosh\bar\theta
+\int  \widetilde{d\theta} \
i\varphi(\theta-\bar{\theta}+i\pi/2)\sinh\theta\\
&
+\sum_{n=2}^\infty \int \widetilde{d\theta_1}\dots
 \widetilde{d\theta_n}\
i\varphi(\theta_1-\bar{\theta}+i\pi/2)\varphi(\theta_1-\theta_2)\dots
\varphi(\theta_{n-1}-\theta_n) \sinh\theta_n
\Big)=\\
=&N_1.
\end{split}
\end{equation}
In other words, the normalization factor obtained in the previous
subsection coincides with the exact total derivative of the
quantization condition for the rapidity $\bar\theta$. This is
analogous to the asymptotic result \eqref{fftcsa2-result2}, which
involves the derivatives of the (asymptotic) Bethe equations.

To conclude this section we present the one-particle result in a
compact formula:
\begin{equation}  
\label{LM1p}
\begin{split}
&    \bra{\bar\theta}\ordo \ket{\bar\theta}_L=\\
&\sum_{n=0}^\infty \frac{1}{n!}
\int \frac{d\theta_1}{2\pi}\dots  \frac{d\theta_n}{2\pi}
\left(\prod_j \frac{1}{1+e^{\eps(\theta_j)}}\right)
F^\ordo_{2n,c}(\theta_1,\dots,\theta_n)\\
&+\left( \frac{\partial \bar Q}{\partial \bar\theta}\right)^{-1}
\sum_{n=0}^\infty \frac{1}{n!}
\int \frac{d\theta_1}{2\pi}\dots  \frac{d\theta_n}{2\pi}
\left(\prod_j \frac{1}{1+e^{\eps(\theta_j)}}\right)
F^\ordo_{2(n+1),c}(\bar\theta+i\pi/2,\theta_1,\dots,\theta_n).
\end{split}
\end{equation}
The infinite series in the second line above is a result of the terms $\mathcal{L}_{jk}$ with $j=0$
and it has exactly the same form as the original LeClair-Mussardo
series; the only difference is that the excited state pseudoenergy
function is used in the weight functions. The third line is an
analogous series, with the two differences being the presence of the
rapidity $\bar\theta+i\pi/2$ and the normalization factor which is of order $1/L$.

\section{Two-particle expectation values}

\label{sec:2pfinal}

Here we perform the partial summation of the expansion
\eqref{LM2pkifejt}-\eqref{LM2pkifejt2} for the two-particle mean-value. The asymptotic result
following from \eqref{fftcsa2-result2} is
\begin{equation}
  \label{2p-asympt}
\begin{split}
&\bra{\bar\theta_1,\bar\theta_2}\ordo
\ket{\bar\theta_1,\bar\theta_2}_L=\vev{\ordo}+\\
&\hspace{2cm}+\frac{F_{4c}^\ordo(\bar\theta_1,\bar\theta_2)
+F_{4c}^\ordo(\bar\theta_1) (mL\cosh\bar\theta_2+\varphi_{12})+
F_{4c}^\ordo(\bar\theta_2) (mL\cosh\bar\theta_1+\varphi_{12})
}{\rho_2(\bar\theta_1,\bar\theta_2)},
\end{split}
\end{equation}
with
\begin{equation*}
  \rho_2(\bar\theta_1,\bar\theta_2)=m^2L^2
  \cosh\bar\theta_1\cosh\bar\theta_2+
mL(\cosh\bar\theta_1+\cosh\bar\theta_2) \varphi_{12}.
\end{equation*}
The summation procedure to obtain this asymptotic result is
already involved and is presented in the following subsection. 
The evaluation of the exact result
\eqref{LM2pkifejt}-\eqref{LM2pkifejt2} with all exponential
corrections included is presented in subsection \ref{sec:exp2p}.

\subsection{All orders in $1/L$}

Dropping the exponential corrections amounts to summing up the terms
with $l=0$ in \eqref{LM2pkifejt}. This leads to
\begin{equation}
\label{2pa}
\begin{split}
&\bra{\bar\theta_1,\bar\theta_2}\mathcal{O}\ket{\bar\theta_1,\bar\theta_2}=
\sum_{j,k=1}^\infty \mathcal{L}_{jk0}=\\
&\hspace{3cm}=\vev{\mathcal{O}}+
 \sum_{m=1}^\infty \frac{1}{m!}
\sum_{i_1\dots i_m=1}^2
\frac{F^\ordo_{2m,c}(\bar\theta_{i_1},\bar\theta_{i_2},\dots,\bar\theta_{i_m})}
{n_{i_1}n_{i_2}\dots n_{i_m}},
\end{split}
\end{equation}
where the normalization factors in this approximation are
\begin{equation*}
\begin{split}
  \frac{1}{n_j}&\equiv 
\text{Res}_{\theta=\bar\theta_j} \frac{1}{1+e^{\eps(\theta)}}=
\frac{1}{mL\cosh\bar\theta_j+\varphi(0)+\varphi(\bar\theta_1-\bar\theta_2)}+
\ordo(e^{-mR}),\qquad\qquad j=1,2.
\end{split}
\end{equation*}
Let us denote by $\mathcal{L}_j$ the sum of the terms which include $j$
rapidities in total:
\begin{equation*}
  \mathcal{L}_j=\sum_{k+l=j} \mathcal{L}_{kl0}.
\end{equation*}
The first few cases can be evaluated using theorem \ref{csak-kettesek}
explicitly as
\begin{equation*}
\begin{split}
  \mathcal{L}_1=&F^\ordo_{2c} \left(\frac{1}{n_1}+\frac{1}{n_2}\right)\\
\mathcal{L}_2=&F^\ordo_{2c}
\left(\frac{\varphi_0}{n_1^2}+\frac{\varphi_0}{n_2^2}\right)+F^\ordo_{4c}(1,2) \frac{1}{n_1n_2}\\
\mathcal{L}_3=&F^\ordo_{2c}
\left(\frac{\varphi_0^2}{n_1^3}+\frac{\varphi_0^2}{n_2^3}+
\frac{\varphi_{12}^2}{n_1^2n_2}+\frac{\varphi_{12}^2}{n_2^2n_1}
\right)\\
&+F^\ordo_{4c}(1,2)\left(\frac{\varphi_0}{n_1^2n_2}+\frac{\varphi_0}{n_2^2n_1}
\right)\\
\mathcal{L}_4=&
F^\ordo_{2c}
\left(\frac{\varphi_0^3}{n_1^4}+\frac{\varphi_0^3}{n_2^4}+
\frac{2\varphi_{12}^2\varphi_0}{n_1^3n_2}+\frac{2\varphi_{12}^2\varphi_0}{n_2^3n_1}+
\frac{2\varphi_{12}^2\varphi_0}{n_2^2n_1^2}
\right)\\
&+
F^\ordo_{4c}(1,2)\left(\frac{\varphi_0^2}{n_1^3n_2}+\frac{\varphi_0^2}{n_2^3n_1}+
\frac{\varphi_0^2+\varphi_{12}^2}{n_1^2n_2^2}
\right).
\end{split}
\end{equation*}

All terms in this series 
can be obtained by an auxiliary problem. Let $A_i$, $i=0\dots\infty$ a series of $2\times 2$
matrices satisfying the recursion relation
\begin{equation}
\label{recursion1}
  A_{i+1}=\begin{pmatrix}
  \frac{1}{n_1} & 0 \\ 0 & \frac{1}{n_2}
\end{pmatrix}+
  \begin{pmatrix}
\frac{\varphi_0}{n_1} & \frac{\varphi_{12}}{n_1}    \\
\frac{\varphi_{12}}{n_2} & \frac{\varphi_0}{n_2}
  \end{pmatrix} A_{i},\qquad\qquad
A_0=0.
\end{equation}

\begin{thm}
  The summation of the series $\mathcal{L}_j$ to a given order $k$ is given by
  \begin{equation}
\label{koztes2p}
    \sum_{j=0}^k \mathcal{L}_j=\text{Tr} \left[
      \begin{pmatrix}
        F^\ordo_{2c} & \frac{F^\ordo_{4c}(1,2)}{2\varphi_{12}}\\
\frac{F^\ordo_{4c}(1,2)}{2\varphi_{12}} & F^\ordo_{2c}
      \end{pmatrix}
A_k
\right].
  \end{equation}
\end{thm}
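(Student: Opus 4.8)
The plan is to reduce the statement to a single term-by-term identity and then repackage it through the recursion. First I would solve \eqref{recursion1} explicitly: since $A_0=0$ and $A_{i+1}=\mathcal{D}+\mathcal{W}A_i$ with
\[
\mathcal{D}=\begin{pmatrix}1/n_1 & 0\\ 0 & 1/n_2\end{pmatrix},\qquad
\mathcal{W}=\begin{pmatrix}\varphi_0/n_1 & \varphi_{12}/n_1\\ \varphi_{12}/n_2 & \varphi_0/n_2\end{pmatrix},
\]
induction gives the closed form $A_k=\sum_{i=0}^{k-1}\mathcal{W}^i\mathcal{D}$. Writing $B$ for the symmetric form-factor matrix on the right of \eqref{koztes2p}, the claim becomes $\sum_{j=1}^{k}\mathcal{L}_j=\sum_{i=0}^{k-1}\text{Tr}[B\,\mathcal{W}^i\mathcal{D}]$ (the $j=0$ term being the vacuum piece $\vev{\ordo}$ already separated in \eqref{2pa}). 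Hence it suffices to prove the single identity $\mathcal{L}_j=\text{Tr}[B\,\mathcal{W}^{j-1}\mathcal{D}]$ for every $j\ge 1$, after which the summation over $j$ is immediate.

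For the right-hand side I would expand the matrix product over paths. Multiplying out, $\text{Tr}[B\,\mathcal{W}^{j-1}\mathcal{D}]$ becomes a sum over all length-$j$ index strings $s=(s_1,\dots,s_j)$, $s_l\in\{1,2\}$: the $j-1$ factors of $\mathcal{W}$ supply the product $[s]_\varphi=\varphi_{s_1 s_2}\cdots\varphi_{s_{j-1}s_j}$, the $\mathcal{W}$'s together with the closing $\mathcal{D}$ supply $\prod_{l}1/n_{s_l}$, and $B$ contributes the endpoint factor $B_{s_1 s_j}$. Grouping the strings by their composition $(a,b)$ (number of $1$'s and $2$'s, $a+b=j$) produces the common denominator $n_1^a n_2^b$, and grouping further by the endpoint pair, using that $B$ is symmetric together with the reversal-invariance of $[s]_\varphi$ (so the $1\to2$ and $2\to1$ strings combine), yields
\[
\text{Tr}[B\,\mathcal{W}^{j-1}\mathcal{D}]=\sum_{a+b=j}\frac{1}{n_1^a n_2^b}\Big(F^\ordo_{2c}\,\Sigma^{(a,b)}_{11}+F^\ordo_{2c}\,\Sigma^{(a,b)}_{22}+\frac{F^\ordo_{4c}(1,2)}{\varphi_{12}}\,\Sigma^{(a,b)}_{12}\Big),
\]
where $\Sigma^{(a,b)}_{\alpha\beta}=\sum_{s\in S_{\alpha,\beta}(\{1^{(\times a)},2^{(\times b)}\})}[s]_\varphi$.

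For the left-hand side I would start from \eqref{2pa}: collecting the $2^j$ index strings in $\mathcal{L}_j$ according to their composition and using $\tfrac1{j!}\binom{j}{a}=\tfrac1{a!\,b!}$ gives $\mathcal{L}_j=\sum_{a+b=j}\tfrac{1}{a!\,b!}\,F^\ordo_{2j,c}(1^{(\times a)},2^{(\times b)})/(n_1^a n_2^b)$. Theorem \ref{csak-kettesek} then evaluates each degenerate form factor (for $a,b\ge2$), and since $F^\ordo_{2c}(1)=F^\ordo_{2c}(2)=F^\ordo_{2c}$ by Lorentz invariance, it precisely cancels the $a!\,b!$ and reproduces the bracket above term by term. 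Comparing the two expansions gives $\mathcal{L}_j=\text{Tr}[B\,\mathcal{W}^{j-1}\mathcal{D}]$, and summing over $1\le j\le k$ closes the argument via the closed form for $A_k$.

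The main obstacle is the boundary compositions, where Theorem \ref{csak-kettesek} does not literally apply: $a=0$ or $b=0$ must be handled by the completely degenerate formula \eqref{teljesendegeneralt}, while the mixed edge cases $a=1$ (or $b=1$) with the other multiplicity $\ge2$ require Theorem \ref{thmegyesek} instead. In each of these I would check directly that the path sums $\Sigma^{(a,b)}_{\alpha\beta}$ degenerate correctly — for instance $\Sigma_{11}^{(1,b)}=0$ because a single $1$ cannot serve as both endpoints, while $\Sigma_{12}^{(1,b)}=\varphi_{12}\varphi_0^{\,b-1}$ and $\Sigma_{22}^{(1,b)}=(b-1)\varphi_{12}^2\varphi_0^{\,b-2}$ — so that the uniform bracket still equals $F^\ordo_{2j,c}(1^{(\times a)},2^{(\times b)})/(a!\,b!)$. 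Verifying that this combinatorial bookkeeping is consistent across all endpoint configurations is the only genuinely delicate point; the remainder is the algebra of the geometric matrix series.
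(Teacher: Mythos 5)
Your proof is correct and follows essentially the same route as the paper's (much terser) argument: the matrix recursion generates all length-$j$ paths of 1's and 2's together with the normalization factors $n_1^{-a}n_2^{-b}$ appearing in \eqref{2pa}, and the identification with $\mathcal{L}_j$ is then made term by term via Theorem \ref{csak-kettesek}. Your explicit handling of the boundary compositions ($a$ or $b$ equal to $0$ or $1$), via \eqref{teljesendegeneralt} and Theorem \ref{thmegyesek}, is a welcome extra level of care, since the paper's proof silently subsumes these cases under Theorem \ref{csak-kettesek} even though that theorem formally requires both multiplicities to be at least $2$.
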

\begin{proof}
The matrix multiplication in \eqref{recursion1} generates the sum over
all possible paths consisting of the numbers 1 and 2. It also generates the
proper normalization factors which appear in \eqref{2pa}.
The statement of the theorem then follows from Theorem
\eqref{csak-kettesek}. 
\end{proof}

Given that $L$ is large enough the expressions in \eqref{koztes2p} are
convergent and
\begin{equation}
\label{Aii}
    \sum_{j=0}^\infty \mathcal{L}_j=\text{Tr} \left[
      \begin{pmatrix}
        F^\ordo_{2c} & \frac{F^\ordo_{4c}(1,2)}{2\varphi_{12}}\\
\frac{F^\ordo_{4c}(1,2)}{2\varphi_{12}} & F^\ordo_{2c}
      \end{pmatrix}
A_\infty
\right],\quad \text{with}\quad
A_\infty=\lim_{j\to\infty} A_j.
\end{equation}

The limiting matrix $A_\infty$ can be obtained from \eqref{recursion1} as
\begin{equation}
\label{Ainfty}
  A_{\infty}=\left(I- \begin{pmatrix}
\frac{\varphi_0}{n_1} & \frac{\varphi_{12}}{n_1}    \\
\frac{\varphi_{12}}{n_2} & \frac{\varphi_0}{n_2}
  \end{pmatrix}\right)^{-1}
\begin{pmatrix}
  \frac{1}{n_1} & 0 \\ 0 & \frac{1}{n_2}
\end{pmatrix}
=\frac{1}{\rho_{2}}
\begin{pmatrix}
mL\cosh\bar\theta_2+\varphi_{12} & \varphi_{12} \\
\varphi_{12} & mL\cosh\bar\theta_1+\varphi_{12}
\end{pmatrix}.
\end{equation}
Putting together \eqref{Aii} and \eqref{Ainfty} leads to the desired
statement \eqref{2p-asympt}.

\subsection{Exponential corrections}

\label{sec:exp2p}

Here we use theorem \ref{egyesekkettesek} to evaluate the expansion
\eqref{LM2pkifejt}-\eqref{LM2pkifejt2}. The terms will be grouped
according to the indices $j$ and $k$.

The terms with no insertions of the external rapidities are $\mathcal{L}_{0,0,l}$
with $l=0\dots\infty$:
\begin{equation}
\label{LM3}
\sum_{l=0}^\infty \mathcal{L}_{0,0,l} \equiv   \vev{\ordo}_{\eps}=\sum_n \frac{1}{n!}
\int \frac{d\theta_1}{2\pi}\dots  \frac{d\theta_n}{2\pi}
\left(\prod_j \frac{1}{1+e^{\eps(\theta_j)}}\right)
F^\ordo_{2n,c}(\theta_1,\dots,\theta_n).
\end{equation}
The terms $\mathcal{L}_{1,0,l}$, $\mathcal{L}_{0,1,l}$ and $\mathcal{L}_{1,1,l}$ are simply
\begin{equation*}
  \begin{split}
\mathcal{L}_{1,0,l}&=  \frac{1}{n_1}  \frac{1}{l!} \int \widetilde{d\theta_3}\dots \widetilde{d\theta_{l+2}}\
F^\ordo_{2(1+l),c}(\bar\theta_1+i\pi/2,\theta_3,\dots,\theta_{l+2})\\
\mathcal{L}_{0,1,l}&=  \frac{1}{n_2}  \frac{1}{l!} \int
\widetilde{d\theta_3}\dots 
\widetilde{d\theta_{l+2}}\
F^\ordo_{2(1+l),c}(\bar\theta_2+i\pi/2,\theta_3,\dots,\theta_{l+2})\\
\mathcal{L}_{1,1,l}&=  \frac{1}{n_1n_2}  \frac{1}{l!} \int \widetilde{d\theta_3}\dots \widetilde{d\theta_{l+2}}\
F^\ordo_{2(2+l),c}(\bt_1+i\pi/2,\bar\theta_2+i\pi/2,\theta_3,\dots,\theta_{l+2}).
  \end{split}
\end{equation*}

The terms $\mathcal{L}_{j,0,l}$ and $\mathcal{L}_{0,j,l}$ with $j\ge 2$, ie. those with
multiple insertions of either one of 
$\bar\theta_{1,2}+i\pi/2$ can be evaluated using Theorem \ref{thmegyesek}:
\begin{equation*}
\begin{split}
&\mathcal{L}_{j,0,l}
   =\frac{1}{k!} \int \widetilde{d\theta_3}\dots \widetilde{d\theta_{k+2}}\
n_1^{-j}\times \\
&\hspace{1cm} 
\sum_{A_k=A^-\cup A^+}
F^\ordo_{2(1+|A^+|),c}(\bt_1+i\pi/2,A^+) 
\sum_{s\in S_{11}(1^{(\times j)},A^-)} [s]_\varphi,
\end{split}
\end{equation*}
where we assumed $j>1$, $A_k=\{3,\dots,k+2\}$ and in the notation for
the sequences we identified
$\theta_1\equiv \bar\theta_1+i\pi/2$. 

Similarly
\begin{equation*}
\begin{split}
&\mathcal{L}_{0,j,l}=\frac{1}{k!} \int \widetilde{d\theta_3}\dots \widetilde{d\theta_{k+2}}\
n_2^{-j}\times \\
&\hspace{1cm} 
\sum_{A_k=A^-\cup A^+}
F^\ordo_{2(1+|A^+|),c}(\bt_2+i\pi/2,A^+) 
\sum_{s\in S_{22}(2^{(\times j)},A^-)} [s]_\varphi,
\end{split}
\end{equation*}
with $\theta_2\equiv \bar\theta_2+i\pi/2$. 

The terms $\mathcal{L}_{j,1,l}$ can also be evaluated using Theorem
\ref{thmegyesek}:
\begin{equation*}
\begin{split}
&\mathcal{L}_{j,1,l}
   =\frac{1}{k!} \int \widetilde{d\theta_3}\dots \widetilde{d\theta_{k+2}}\
\frac{1}{n_1^{j}n_2}\times \\
&\hspace{1cm} 
\sum_{A_k=A^-\cup A^+}\Big(
F^\ordo_{2(2+|A^+|),c}(\bt_1+i\pi/2,\bt_2+i\pi/2,A^+) 
\sum_{s\in S_{11}(1^{(\times j)},A^-)} [s]_\varphi\\
&\hspace{3cm}+F^\ordo_{2(1+|A^+|),c}(\bt_1+i\pi/2,A^+) 
\sum_{s\in S_{11}(1^{(\times j)},2,A^-)} [s]_\varphi
\Big),
\end{split}
\end{equation*}
where again $A_k=\{3,\dots,k+2\}$. An analogous expression holds for
the terms $\mathcal{L}_{1,j,l}$ with the role of the numbers 1 and 2 interchanged.

Finally Theorem \ref{egyesekkettesek} yields the terms with multiple
insertions of both external rapidities:
\begin{equation*}
\begin{split}
&\mathcal{L}_{j,k,l}
   =\frac{1}{l!} \int \widetilde{d\theta_3}\dots \widetilde{d\theta_{l+2}}\
\frac{1}{n_1^{j}n_2^k}\times \\
&\hspace{1cm} 
\sum_{A_l=A^-\cup A^+}\Big(
F^\ordo_{2(1+|A^+|),c}(\bt_1+i\pi/2,A^+) 
\sum_{s\in S_{11}(1^{(\times j)},2^{(\times k)},A^-)} [s]_\varphi\\
&\hspace{3cm}+F^\ordo_{2(1+|A^+|),c}(\bt_2+i\pi/2,A^+) 
\sum_{s\in S_{22}(1^{(\times j)},2^{(\times k)},A^-)} [s]_\varphi\\
&\hspace{3cm}+\frac{1}{\varphi_{12}}F^\ordo_{2(2+|A^+|),c}(\bt_1+i\pi/2,\bt_2+i\pi/2,A^+) 
\sum_{s\in \tilde S_{12}(1^{(\times j)},2^{(\times k)},A^-)} [s]_\varphi
\Big),
\end{split}
\end{equation*}
with $A_l=\{3,\dots,l+2\}$.

Adding all the terms above and grouping them according to the form factor
content we obtain the expansion
\begin{equation}
\label{2p-kifejt}
  \begin{split}
&  \bra{\bar\theta_1,\bar\theta_2}\ordo
  \ket{\bar\theta_1,\bar\theta_2}_L=
\vev{\ordo}_\eps+\\
&\hspace{2cm}+\frac{\sum_{l=0}^\infty
\frac{1}{k!}\int \widetilde{d\theta_3}\dots \widetilde{d\theta_{k+2}} \
F^\ordo_{2(1+k),c}(\bt_1+i\pi/2,\theta_3,\dots,\theta_{k+2})}{N_1}\\
&\hspace{2cm}+
\frac{\sum_{k=0}^\infty
\frac{1}{k!}\int \widetilde{d\theta_3}\dots \widetilde{d\theta_{k+2}} \
F^\ordo_{2(1+k),c}(\bt_2+i\pi/2,\theta_3,\dots,\theta_{k+2})}{N_2}\\
&\hspace{2cm}+
\frac{\sum_{k=0}^\infty
\frac{1}{k!}\int \widetilde{d\theta_3}\dots \widetilde{d\theta_{k+2}} \
F^\ordo_{2(2+k),c}(\bt_1+i\pi/2,\bt_2+i\pi/2,\theta_3,\dots,\theta_{k+2})}{N_{12}},
  \end{split}
\end{equation}
where the normalization factors are 
\begin{equation}
\label{norma121}
\begin{split}
  \frac{1}{N_1}=
&\frac{1}{n_1}+\sum_{j=2}^\infty\sum_{l=0}^\infty
\frac{1}{n_1^{j}} \frac{1}{l!}\int 
\widetilde{d\theta_3}\dots \widetilde{d\theta_{l+2}}\
\sum_{s\in S_{11}(1^{(\times j)},3,\dots,l+2)} [s]_\varphi\\
&+\sum_{j=2}^\infty\sum_{k=1}^\infty \sum_{l=0}^\infty
\frac{1}{n_1^{j}n_2^k} \frac{1}{l!}\int 
\widetilde{d\theta_3}\dots \widetilde{d\theta_{l+2}}\
\sum_{s\in S_{11}(1^{(\times j)},2^{(\times k)},3,\dots,l+2)} [s]_\varphi,
\end{split}
\end{equation}
\begin{equation}
\label{norma122}
\begin{split}
  \frac{1}{N_2}=
&\frac{1}{n_2}+\sum_{j=2}^\infty\sum_{l=0}^\infty
\frac{1}{n_2^{j}} \frac{1}{l!}\int 
\widetilde{d\theta_3}\dots \widetilde{d\theta_{l+2}}\
\sum_{s\in S_{22}(2^{(\times j)},3,\dots,l+2)} [s]_\varphi\\
&+\sum_{j=2}^\infty\sum_{k=1}^\infty \sum_{l=0}^\infty
\frac{1}{n_2^{j}n_1^k} \frac{1}{l!}\int 
\widetilde{d\theta_3}\dots \widetilde{d\theta_{l+2}}\
\sum_{s\in S_{22}(1^{(\times k)},2^{(\times j)},3,\dots,l+2)} [s]_\varphi
\end{split}
\end{equation}
and finally
\begin{equation}
\label{norma1212}
\begin{split}
  \frac{1}{N_{12}}=
&\frac{1}{n_1n_2}
+\sum_{j=2}^\infty\sum_{l=0}^\infty
\frac{1}{n_1^{j}n_2} \frac{1}{l!}\int 
\widetilde{d\theta_3}\dots \widetilde{d\theta_{l+2}}\
\sum_{s\in S_{11}(1^{(\times j)},3,\dots,l+2)} [s]_\varphi\\
&+\sum_{j=2}^\infty\sum_{l=0}^\infty
\frac{1}{n_1n_2^j} \frac{1}{l!}\int 
\widetilde{d\theta_3}\dots \widetilde{d\theta_{l+2}}\
\sum_{s\in S_{22}(2^{(\times j)},3,\dots,l+2)} [s]_\varphi\\
&+\frac{1}{\varphi_{12}}\sum_{j=2}^\infty\sum_{k=2}^\infty \sum_{l=0}^\infty
\frac{1}{n_1^{j}n_2^k} \frac{1}{l!}\int 
\widetilde{d\theta_3}\dots \widetilde{d\theta_{l+2}}\
\sum_{s\in \tilde S_{12}(1^{(\times j)},2^{(\times k)},3,\dots,l+2)} [s]_\varphi.
\end{split}
\end{equation}

The double and triple sums in the previous three formulas can be
simplified dramatically. However, first we have to introduce new
notations motivated by the expressions above.

Let $R_{a,b}$, $a,b=1,2$  be four sets of finite sequences with the following
properties: $s\in R_{a,b}$ if 
\begin{enumerate}
\item Either $s=(1^{(\times j)})$ or $s=(2^{(\times j)})$ with some $j\ge 2$, or
$s$ is a permutation of the sequences
\begin{equation*}
  (1^{(\times j)},2^{(\times k)})
\end{equation*}
with some $j\ge 1$, $k\ge 1$, or 
\begin{equation*}
  (1^{(\times j)},2^{(\times k)},3,\dots,l)
\end{equation*}
with some $j+k \ge 2$ and $l\ge 3$.
\item $s$ starts with $a$ and ends with $b$.
\item When numbers other than 1  and 2 are present, they appear in
  increasing order.
\end{enumerate}
The difference between the sets $R_{a,b}$ and the set $P_{1,1}$ introduced in
section \ref{sec:1pb} is that in the sequences in  $R_{a,b}$ the number 2 can
appear multiple times at arbitrary positions and only the ordering of
the numbers greater than 2 is constrained.

It is useful to introduce a shorthand for the integrals entering
\eqref{norma121}-\eqref{norma1212}. We define
\begin{equation}
\label{ezdenagyfaszsag}
  [s]_{12}=n_1^{-m(1,s)}n_2^{-m(2,s)}n_{s(l_s)} 
\int \widetilde{d\theta_3}\dots \widetilde{d\theta}_{l_s-l(1,s)-l(2,s)}\    [s]_\varphi,
\end{equation}
where $m(1,s)$ and $m(2,s)$ denote the multiplicity of the numbers 1
and 2 in $s$, respectively and $l_s$ is simply the length of $s$
(therefore $s(l_s)$ denotes the last element of the sequence).  
For example 
\begin{equation*}
  [21321]_{12}=\frac{1}{n_1n_2^2} \varphi_{12}^2 
\int \widetilde{d\theta_3}\  \varphi_{23}\varphi_{13}.
\end{equation*}
This function satisfies
\begin{equation}
\label{szorzo-tulajdonsag}
  [aAb]_{12}\times [bBc]_{12}=[aAbBc]_{12}
\end{equation}
for every $a,b,c=1,2$ and $A,B$ being arbitrary finite sequences.

With these notations the first two normalization factors 
in \eqref{2p-kifejt}
can be
expressed as
\begin{equation}
  \label{norma1b}
    \frac{1}{N_1}=
\frac{1}{n_1}+\frac{1}{n_1}
\sum_{s\in R_{1,1}}  [s]_{12}
\end{equation}
\begin{equation}
  \label{norma1c}
    \frac{1}{N_2}=
\frac{1}{n_2}+\frac{1}{n_2}
\sum_{s\in R_{2,2}}  [s]_{12}.
\end{equation}

It is also possible to obtain a compact formula for $N_{12}$ expressed
as a simple sum. We define $\tilde R_{1,2}$ to be the subset of $R_{1,2}$ which
includes sequences where the first number after the rightmost 1 is
2. The first few sequences are 
\begin{equation*}
  \begin{split}
\tilde R_{1,2}=\Big\{
(12),(112),(122),
(1112),(1122),(1222),(1232),(1312),\dots
\Big\}.
  \end{split}
\end{equation*}

\begin{thm}
\begin{equation}
  \label{norma1d}
      \frac{1}{N_{12}}=
\frac{1}{\varphi_{12}}\frac{1}{n_2}
\sum_{s\in \tilde R_{1,2}}  [s]_{12}.
\end{equation}  
\end{thm}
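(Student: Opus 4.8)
The plan is to prove \eqref{norma1d} by reorganizing the four contributions to the explicit expression \eqref{norma1212}, grouping the terms according to the multiplicities $m(1,s)$ and $m(2,s)$ of the two external indices. First I would invoke the same bookkeeping fact that was used for the one-particle factor just after \eqref{norma2}: since the auxiliary rapidities $\theta_3,\theta_4,\dots$ are integrated symmetrically, summing over all orderings with a weight $1/l!$ (as in \eqref{norma1212}) is identical to summing over the single increasing ordering with no $1/l!$ (as in the definition \eqref{ezdenagyfaszsag} of $[s]_{12}$ and in the sets $R_{a,b}$, $\tilde R_{1,2}$). This equivalence lets me rewrite each piece of \eqref{norma1212} in the $[s]_{12}$ notation purely by matching powers of $n_1$ and $n_2$, keeping careful track of the trailing factor $n_{s(l_s)}$.

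Carrying out this translation, I expect the four contributions to become, respectively: the leading term $\frac{1}{n_1n_2}=\frac{1}{\varphi_{12}n_2}[12]_{12}$, using $[12]_{12}=\varphi_{12}/n_1$; the $S_{11}$ sum equal to $\frac{1}{n_1n_2}\sum_{u}[u]_{12}$, with $u$ running over the sequences from $1$ to $1$ containing no $2$ (call this set $R_{1,1}^{(0)}$); the $S_{22}$ sum equal to $\frac{1}{n_1n_2}\sum_{v}[v]_{12}$, with $v\in R_{2,2}^{(0)}$; and the $\tilde S_{12}$ triple sum equal to $\frac{1}{\varphi_{12}n_2}\sum_{s}[s]_{12}$ over those $s\in\tilde R_{1,2}$ with $m(1,s)\ge2$ and $m(2,s)\ge2$.

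The crucial step is then to observe that $\tilde R_{1,2}$ is partitioned by the pair $(m(1,s),m(2,s))$ into the four classes $(1,1)$, $(\ge2,1)$, $(1,\ge2)$, $(\ge2,\ge2)$, and that these match the four pieces above. For the boundary classes I would use the multiplicativity \eqref{szorzo-tulajdonsag} with the single primitive factor $[12]_{12}=\varphi_{12}/n_1$: any $s$ with $m(2,s)=1$ has its unique $2$ at the end with the rightmost $1$ immediately before it, so $s=u\,2$ with $u\in R_{1,1}^{(0)}$ and $[s]_{12}=[u]_{12}[12]_{12}=\frac{\varphi_{12}}{n_1}[u]_{12}$; symmetrically any $s$ with $m(1,s)=1$ is $s=1\,v$ with $v\in R_{2,2}^{(0)}$ and $[s]_{12}=\frac{\varphi_{12}}{n_1}[v]_{12}$; and the class $(1,1)$ is the single sequence $(12)$. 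Multiplying the translated \eqref{norma1212} through by $\varphi_{12}n_2$ then identifies each piece with the corresponding class sum, and the four class sums reassemble to $\sum_{s\in\tilde R_{1,2}}[s]_{12}$, which is exactly \eqref{norma1d}.

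The point demanding the most care — and what I expect to be the main obstacle — is the treatment of the boundary sequences carrying only a single $1$ or a single $2$. These are absent from the $\tilde S_{12}$ triple sum of \eqref{norma1212}, which is restricted to $j,k\ge2$, so they must be supplied by the leading term and by the $S_{11}$ and $S_{22}$ sums. The latter superficially look like $N_1$- and $N_2$-type objects, and one has to recognize that, after appending the matching external index through $[12]_{12}$, they are precisely the $(\ge2,1)$ and $(1,\ge2)$ parts of $\tilde R_{1,2}$. Checking that this appending is a bijection onto each class, with the correct power of $n_1$ and no double counting, is the heart of the argument; once it is in place the stated identity follows immediately.
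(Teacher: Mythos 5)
Your proof is correct and follows essentially the same route as the paper's: both partition $\tilde R_{1,2}$ according to the multiplicities of 1 and 2 and match the four resulting classes one-to-one with the four pieces of \eqref{norma1212}, supplying the boundary classes (a single 2 or a single 1) by appending a 2 to the $S_{11}$ sequences and prepending a 1 to the $S_{22}$ sequences at the cost of a factor $\varphi_{12}$. The only cosmetic difference is that you organize the bookkeeping through $[12]_{12}=\varphi_{12}/n_1$ and the multiplicativity \eqref{szorzo-tulajdonsag}, whereas the paper works directly at the level of $[s2]_\varphi=\varphi_{12}[s]_\varphi$ and $[1s]_\varphi=\varphi_{12}[s]_\varphi$.
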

\begin{proof}
The idea of the proof is to show that each term in \eqref{norma1212}
corresponds to one of the elements of $\tilde R_{1,2}$ and that
\eqref{norma1d} provides the correct normalization factors.

The first term in \eqref{norma1212} corresponds to the sequence
$(12)\in  \tilde R_{1,2}$. We have by definition
\begin{equation*}
  \frac{1}{n_1n_2}=\frac{1}{\varphi_{12}} \frac{[12]_\varphi}{n_1n_2}.
\end{equation*}

Concerning the second term in the r.h.s. of \eqref{norma1212} note
that if $s\in S_{11}(1^{(\times j)},3,\dots,l+2)$ then the
 sequence $(s2)$ satisfies
\begin{equation*}
(s2)\in \tilde S_{12}(1^{(\times j)},2,3,\dots,l+2), \qquad
 [s2]_\varphi={\varphi_{12}}[s]_\varphi.
\end{equation*}
Similarly, if  $s\in S_{22}(2^{(\times
  j)},3,\dots,l+2)$ then the sequence $(1s)$ satisfies
\begin{equation*}
(1s)\in \tilde S_{12}(1,2^{(\times j)},3,\dots,l+2), \qquad
 [1s]_\varphi={\varphi_{12}}[s]_\varphi.
\end{equation*}
These two cases give the sequences of $\tilde R_{1,2}$ in which the
number 2 (or 1) appears only once, respectively.

Finally the the last line of \eqref{norma1212} yields all the
remaining sequences, where both 1 and 2 appear at least twice.

Using the definition \eqref{ezdenagyfaszsag} we obtain the statement of the theorem.
\end{proof}

We introduce the four sets of sequences $T_{a,b}$ with $a,b=1,2$:
\begin{equation}
  \label{Tab}
 T_{a,b}\equiv \{(ab),(a3b),(a34b),(a345b),
\dots\}.
\end{equation}

Consider the auxiliary problem
\begin{equation}
\label{recursion2}
  A_{i+1}=\begin{pmatrix}
  \frac{1}{n_1} & 0 \\ 0 & \frac{1}{n_2}
\end{pmatrix}+M\times
  A_{i}\qquad\qquad
A_0=0
\end{equation}
with
\begin{equation*}
M=  \begin{pmatrix}
\sum_{s\in T_{1,1}} [s]_{12} & \sum_{s\in T_{1,2}} [s]_{12} \\
\sum_{s\in T_{2,1}} [s]_{12} & \sum_{s\in T_{2,2}} [s]_{12}  
  \end{pmatrix}.
\end{equation*}

This recursion relation generates all sequences of
integers which are elements of the sets $R_{a,b}$. 

\begin{thm}
  The limiting matrix $A_\infty$ has the following diagonal elements:
  \begin{equation}
\label{Alim1}
      (A_\infty)_{1,1}=\frac{1}{N_1} \qquad
    (A_\infty)_{2,2}=\frac{1}{N_2}.
  \end{equation}
\end{thm}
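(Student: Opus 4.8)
The plan is to solve the linear recursion \eqref{recursion2} in closed form and then read off its diagonal entries combinatorially. Writing $D=\mathrm{diag}(1/n_1,1/n_2)$ for the inhomogeneous term, the recursion $A_{i+1}=D+MA_i$ with $A_0=0$ iterates at once to $A_i=\sum_{p=0}^{i-1}M^pD$, so that for $L$ large enough (where the entries of $M$ are of order $1/L$ and the Neumann series converges) one has $A_\infty=\sum_{p=0}^\infty M^pD=(I-M)^{-1}D$. Since $D$ is diagonal the diagonal entries decouple, $(A_\infty)_{a,a}=\tfrac{1}{n_a}\sum_{p=0}^\infty (M^p)_{a,a}$, so it suffices to evaluate the geometric-type sums $\sum_p (M^p)_{a,a}$.

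Next I would expand each matrix power as a sum over index paths: $(M^p)_{1,1}=\sum M_{c_0c_1}\cdots M_{c_{p-1}c_p}$ over all paths $c_0=1,c_1,\dots,c_p=1$ with $c_r\in\{1,2\}$, where each factor is itself $M_{c_rc_{r+1}}=\sum_{t_r\in T_{c_r,c_{r+1}}}[t_r]_{12}$ with $T_{a,b}$ as in \eqref{Tab}. Selecting one block $t_r$ from each factor and applying the multiplicative rule \eqref{szorzo-tulajdonsag} repeatedly collapses the product $[t_0]_{12}\cdots[t_{p-1}]_{12}$ into a single $[s]_{12}$, where $s$ is the concatenation of the blocks along their shared junction indices $c_1,\dots,c_{p-1}$; in particular the many $n$-factors of the individual blocks telescope to reproduce exactly the prefactor of $[s]_{12}$, whose trailing factor $n_{s(l_s)}$ equals $n_{c_p}=n_1$. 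Since each $t_r$ begins with $c_r$ and ends with $c_{r+1}$, the concatenation $s$ begins and ends with $1$ and is assembled from exactly $p$ blocks.

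The central step is a bijection: as $p\ge1$ and the blocks range over the $T$-sets, the concatenations $s$ run exactly once over $R_{1,1}$, which is the content of the remark after \eqref{recursion2} that the recursion generates all of $R_{a,b}$. Concretely, given $s\in R_{1,1}$ its subsequence of $1$'s and $2$'s is a canonical skeleton that fixes the block boundaries, while the auxiliary labels lying between two consecutive special letters—being globally increasing by property~3 of the definition of $R_{1,1}$—form an increasing run inside each block, so after the value-preserving relabeling of the dummy integration variables each block lands in the appropriate $T_{c_r,c_{r+1}}$. Concatenation inverts this reading, making the decomposition unique and exhaustive; hence $\sum_{p\ge1}(M^p)_{1,1}=\sum_{s\in R_{1,1}}[s]_{12}$, and adding the $p=0$ identity term gives $\sum_{p\ge0}(M^p)_{1,1}=1+\sum_{s\in R_{1,1}}[s]_{12}$.

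Multiplying by $1/n_1$ and comparing with \eqref{norma1b} then yields $(A_\infty)_{1,1}=\tfrac{1}{n_1}\big(1+\sum_{s\in R_{1,1}}[s]_{12}\big)=1/N_1$, and the identical argument with $1\leftrightarrow 2$ and \eqref{norma1c} gives $(A_\infty)_{2,2}=1/N_2$. I expect the main obstacle to be the bijection of the third paragraph: one must check that the off-diagonal entries $M_{1,2},M_{2,1}$ generate precisely the interspersed $2$'s (respectively $1$'s) with the correct multiplicities, that the relabeling of auxiliary variables neither drops nor duplicates any element of $R_{1,1}$, and—most delicately—that the $n$-power and junction bookkeeping behind \eqref{szorzo-tulajdonsag} leaves exactly one overall factor $n_1$ to cancel against the prefactor $1/n_1$. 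The remaining manipulations are routine resummations.
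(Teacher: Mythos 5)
Your proposal is correct and takes essentially the same route as the paper: the paper's own (very terse) proof rests precisely on the Neumann series for the recursion \eqref{recursion2}, the multiplicative property \eqref{szorzo-tulajdonsag}, and the preceding remark that the recursion generates exactly the sequences in $R_{a,b}$, which is then compared with \eqref{norma1b}--\eqref{norma1c}. You have simply made explicit the path expansion of the matrix powers, the relabeling of the dummy auxiliary indices, and the block-decomposition bijection that the paper leaves implicit.
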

\begin{proof}
The statement of the theorem follows from the property \eqref{szorzo-tulajdonsag}.
\end{proof}

\begin{thm}
  The off-diagonal element $(A_\infty)_{1,2}$ is
  \begin{equation}
\label{furcsa-allitas}
    (A_\infty)_{1,2}=
  \frac{n_1}{n_2}\frac{1}{\varphi_{12}}
 \sum_{s\in T_{1,2}} [s]_{12} 
\sum_{s\in \tilde R_{1,2}}  [s]_{12}.
  \end{equation}
\end{thm}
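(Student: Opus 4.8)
The plan is to solve the linear recursion \eqref{recursion2} in closed form and read off its off-diagonal entry. Writing $D=\mathrm{diag}(1/n_1,1/n_2)$, the fixed point of \eqref{recursion2} is $A_\infty=(I-M)^{-1}D$, so that $(A_\infty)_{1,2}=\tfrac{1}{n_2}[(I-M)^{-1}]_{1,2}=\tfrac{1}{n_2}\sum_{s\in R_{1,2}}[s]_{12}$, exactly along the lines of the proof of \eqref{Alim1}: expanding $(I-M)^{-1}=\sum_{p\ge 0}M^p$ and using the multiplicativity \eqref{szorzo-tulajdonsag} to glue the connected blocks $T_{a,b}$ at their shared endpoints turns $\sum_p (M^p)_{1,2}$ into the sum over all sequences of $R_{1,2}$ (each such sequence decomposes uniquely into connected blocks by cutting at its labels $1$ and $2$). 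It therefore suffices to prove the purely combinatorial identity $\sum_{s\in R_{1,2}}[s]_{12}=\tfrac{n_1}{\varphi_{12}}\big(\sum_{s\in T_{1,2}}[s]_{12}\big)\big(\sum_{s\in\tilde R_{1,2}}[s]_{12}\big)$, after which multiplication by $1/n_2$ gives \eqref{furcsa-allitas}.

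First I would decompose every $s\in R_{1,2}$ at its rightmost occurrence of the label $1$. This splits $s$ uniquely as $s=(u1)(1A2)(2w)$, where $u1$ is a (possibly trivial) walk returning to $1$, the block $1A2\in T_{1,2}$ is the connected component emanating from the rightmost $1$ (so $A$ contains only labels $\ge 3$), and $2w$ contains no $1$ and ends in $2$. Two applications of \eqref{szorzo-tulajdonsag} give $[s]_{12}=[u1]_{12}\,[1A2]_{12}\,[2w]_{12}$, whence $\sum_{R_{1,2}}[s]_{12}=P\cdot M_{12}\cdot Q$, where $P=\sum_{u1}[u1]_{12}$ ranges over all $1\to 1$ walks, $Q=\sum_{2w}[2w]_{12}$ ranges over all $2\to 2$ walks avoiding $1$, and $M_{12}=\sum_{T_{1,2}}[s]_{12}$.

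Applying the same cut to $\tilde R_{1,2}$ produces the identical reservoir sums $P$ and $Q$, since an element of $\tilde R_{1,2}$ is precisely one whose rightmost $1$ is immediately followed by $2$, i.e. one for which the middle block degenerates to the bare adjacency $(12)$. The one genuinely delicate point is that gluing $u1$ (ending in $1$) to $2w$ (beginning in $2$) is a \emph{mismatched} junction to which \eqref{szorzo-tulajdonsag} does not directly apply. A short computation from the definition \eqref{ezdenagyfaszsag}, tracking the prefactor $n_{s(l_s)}$ and the extra edge $\varphi(\theta_1-\theta_2)=\varphi_{12}$ created at the junction, yields $[u12w]_{12}=\tfrac{\varphi_{12}}{n_1}[u1]_{12}[2w]_{12}$, and hence $\sum_{\tilde R_{1,2}}[s]_{12}=\tfrac{\varphi_{12}}{n_1}\,P\,Q$.

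Dividing the two factorizations makes the reservoir sums $P$ and $Q$ cancel identically and leaves $\sum_{R_{1,2}}[s]_{12}=\tfrac{n_1}{\varphi_{12}}M_{12}\sum_{\tilde R_{1,2}}[s]_{12}$, which is exactly the desired identity and proves \eqref{furcsa-allitas}. I expect the main obstacle to lie not in the geometric-series algebra, which is routine, but in the bookkeeping at this mismatched junction: establishing $[u12w]_{12}=\tfrac{\varphi_{12}}{n_1}[u1]_{12}[2w]_{12}$ with the correct powers of $n_1,n_2$, and verifying that the cut at the rightmost $1$ is a true bijection so that the factors $P$ and $Q$ are literally the same objects in both decompositions and cancel cleanly.
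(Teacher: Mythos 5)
Your proof is correct and follows essentially the same route as the paper: the geometric-series evaluation $(A_\infty)_{1,2}=\tfrac{1}{n_2}\sum_{s\in R_{1,2}}[s]_{12}$, the cut at the rightmost $1$, and the multiplicative bookkeeping built on $[12]_{12}=\varphi_{12}/n_1$ are exactly the paper's ingredients, where the cut is written $s=(A1C2B)$ together with the identity $[A1C2B]_{12}=\tfrac{n_1}{\varphi_{12}}[A12B]_{12}[1C2]_{12}$. Your intermediate factorization of both sums through the reservoir sums $P$ and $Q$, which subsequently cancel, is only a slight repackaging of the paper's direct bijection between $R_{1,2}$ and $T_{1,2}\times \tilde R_{1,2}$.
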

\begin{proof}
The iterative procedure gives
\begin{equation*}
     (A_\infty)_{1,2}=
\frac{1}{n_2}\sum_{s\in R_{1,2}} [s]_{12}.
\end{equation*}
A given sequence $s\in R_{1,2}$ can be written as
\begin{equation*}
  s=(A1C2B),
\end{equation*}
such that if $A$ is not empty then it starts with 1, if $B$ is not
empty then it ends with 2, and neither $B$ nor $C$ contain any
1's. Using the identity
\begin{equation*}
 [A1C2B]_{12}=\frac{n_1}{\varphi_{12}} [A12B]_{12} [1C2]_{12}
\end{equation*}
we obtain an exact bijection between the two sides of \eqref{furcsa-allitas}.
\end{proof}

We introduce the quantities
\begin{equation*}
\mathcal{N}_\varphi\equiv n_1 \sum_{s\in T_{1,2}} [s]_{12}  
\end{equation*}
and
\begin{equation}
\label{Njj}
\mathcal{N}_j\equiv n_j(1-M_{j,j})-\mathcal{N}_\varphi
\qquad\qquad
j=1,2.
\end{equation}

The explicit form of $\mathcal{N}_\varphi$ is
\begin{equation}
\begin{split}
\label{norm_factor_szepen_kifejezve2p_phi}
\mathcal{N}_\varphi&
=\varphi(\bar\theta_1-\bar\theta_2)
+\int  \widetilde{d\theta} \
\varphi(\theta-\bar{\theta}_1-i\pi/2)
\varphi(\theta-\bar{\theta}_2-i\pi/2)+\\
&+
\sum_{n=2}^\infty \int \widetilde{d\theta_1}\dots
 \widetilde{d\theta_n}\
\varphi(\theta_1-\bar{\theta}_1-i\pi/2)\varphi_{12}\dots
\varphi_{n-1,n}\ \varphi(\theta_n-\bar{\theta}_2-i\pi/2).
\end{split}
\end{equation}

On the other hand,
using the definitions \eqref{Njj} and \eqref{recursion2} and the
explicit form of $n_{1,2}$ given by \eqref{n1_szepen_kifejezve12}
we obtain the integral series
\begin{equation}
\begin{split}
\label{norm_factor_szepen_kifejezve2p}
\mathcal{N}_j&=m\cosh\bar\theta_j
+\int  \widetilde{d\theta} \
i\varphi(\theta-\bar{\theta}_j+i\pi/2)m\sinh\theta+\\
&+
\sum_{n=2}^\infty \int \widetilde{d\theta_1}\dots
 \widetilde{d\theta_n}\
i\varphi(\theta_1-\bar{\theta}_j+i\pi/2)\varphi_{12}\dots
\varphi_{n-1,n}\ m\sinh\theta_n\\
&\hspace{8cm}\text{with }j=1,2.
\end{split}
\end{equation}

With these notations the limiting matrix $A_\infty$ can be obtained as
\begin{equation}
\label{Ainfty2}
  A_{\infty}=\left(I- M
\right)^{-1}
\begin{pmatrix}
  \frac{1}{n_1} & 0 \\ 0 & \frac{1}{n_2}
\end{pmatrix}
=\frac{1}{  \mathcal{N}_1  \mathcal{N}_2+(  \mathcal{N}_1+  \mathcal{N}_2)\mathcal{N}_\varphi }
\begin{pmatrix}
  \mathcal{N}_2+\mathcal{N}_\varphi & \mathcal{N}_\varphi \\
\mathcal{N}_\varphi &   \mathcal{N}_1+\mathcal{N}_\varphi
\end{pmatrix}.
\end{equation}

Putting together \eqref{2p-kifejt}, \eqref{Alim1}, \eqref{norma1d} and
\eqref{furcsa-allitas}
we obtain our final formula
\begin{equation}
\label{final2p}
  \begin{split}
&  \bra{\bar\theta_1,\bar\theta_2}\ordo
  \ket{\bar\theta_1,\bar\theta_2}_L=
\vev{\ordo}_\eps+\frac{1}{ \mathcal{N}_1  \mathcal{N}_2+(
  \mathcal{N}_1+  \mathcal{N}_2)\mathcal{N}_\varphi }\Big\{ \\
&\hspace{2cm}+ (\mathcal{N}_2+\mathcal{N}_\varphi)\sum_{k=0}^\infty
\frac{1}{k!}\int \widetilde{d\theta_1}\dots \widetilde{d\theta_k} \
F^\ordo_{2(1+k),c}(\bt_1+i\pi/2,\theta_1,\dots,\theta_k)\\
&\hspace{2cm}+
(\mathcal{N}_1+\mathcal{N}_\varphi) \sum_{k=0}^\infty
\frac{1}{k!}\int \widetilde{d\theta_1}\dots \widetilde{d\theta_k} \
 F^\ordo_{2(1+k),c}(\bt_2+i\pi/2,\theta_1,\dots,\theta_k)\\
&\hspace{2cm}+
\sum_{k=0}^\infty
\frac{1}{k!}\int \widetilde{d\theta_1}\dots \widetilde{d\theta_k} \
F^\ordo_{2(2+k),c}(\bt_1+i\pi/2,\bt_2+i\pi/2,\theta_1,\dots,\theta_k)\Big\}.
  \end{split}
\end{equation}

\subsection{Interpretation of the result}

The quantities $\mathcal{N}_1$, $\mathcal{N}_2$ and
$\mathcal{N}_\varphi$ defined above appear to be dressed versions of
the elements of the Jacobian of the two-particle Bethe equations. In
the following we show that they are indeed the appropriate total
derivatives of the two-particle quantization conditions as defined by
the excited state TBA.

We define
\begin{equation*}
  \bar Q_j=-i\eps(\bar\theta_j+i\pi/2),\qquad\qquad j=1,2,
\end{equation*}
and
\begin{equation*}
\mathcal{K}_{jk}\equiv  \frac{\partial\bar Q_j}{\partial
  \bar\theta_k}.
\end{equation*}
Here it is understood that a total derivative has to be taken,
including the dependence of the pseudoenergy function on the roots $\bar\theta_{1,2}$.

The diagonal elements of the Jacobian are given by
\begin{equation*}
\mathcal{K}_{jj}=mL\cosh\bar\theta_j+\varphi(\bar\theta_1-\bar\theta_2)
-\int  \frac{d\theta'}{2\pi}i\varphi(\theta'-\bar\theta_j-i\pi/2)
\frac{1}{1+e^{\eps(\theta')}}
\left(\frac{\partial\eps}{\partial\theta }
+\frac{\partial\eps}{\partial\bar\theta_j }\right)
,\qquad j=1,2.
\end{equation*}
For the off-diagonal elements
\begin{equation*}
 \mathcal{K}_{jk}=   -\varphi(\bar\theta_j-\bar\theta_k)
-\int  \frac{d\theta'}{2\pi}i\varphi(\theta'-\bar\theta_j-i\pi/2)
\frac{1}{1+e^{\eps(\theta')}}\frac{\partial\eps}{\partial\bar\theta_k}
,\qquad j\ne k.
\end{equation*}
It follows from \eqref{2particleTBA} that the derivatives of the pseudoenergy satisfy
\begin{equation*}
\begin{split}
 (1-\hat{K})\frac{\partial\eps}{\partial\bar\theta_j }
 &=-i\varphi(\theta-\bar\theta_j-i\pi/2),\qquad j=1,2 \qquad\text{and}\\
\label{n12kakki22}
 (1-\hat{K})\frac{\partial\eps}{\partial\theta
 }&=mL\sinh\theta+i\varphi(\theta-\bar\theta_1-i\pi/2)+i\varphi(\theta-\bar\theta_2-i\pi/2).
\end{split}
\end{equation*}

Inverting the integral operator $\hat K$ and comparing with the
explicit formulas 
\eqref{norm_factor_szepen_kifejezve2p_phi}-\eqref{norm_factor_szepen_kifejezve2p}
we find
\begin{equation*}
\begin{split}
\mathcal{K}_{jk}\equiv  \frac{\partial \bar Q_j}{\partial \bar\theta_k}=
\begin{pmatrix}
  \mathcal{N}_1+\mathcal{N}_\varphi & -\mathcal{N}_\varphi \\
-\mathcal{N}_\varphi &  \mathcal{N}_2+\mathcal{N}_\varphi 
\end{pmatrix}.
\end{split}
\end{equation*}

Therefore our final result \eqref{final2p} is indeed
a ``dressed version'' of the asymptotic result \eqref{2p-asympt}, 
where instead of single form factors an appropriate integral
series appears, and the derivatives of the asymptotic Bethe equations
have been replaced by the total derivatives of the exact excited state quantization
conditions.

\section{Conjecture for the general multiparticle case}

\label{sec:final}

The results of the previous two sections suggest a simple
generalization to the higher particle case. We formulate our
conjecture below.

First it is useful to define a dressing function as follows. Given a certain
diagonal form factor $F_{2k}^\ordo(\bar\theta_1,\dots,\bar\theta_k)$ its
``dressed version'' is defined as
\begin{equation}
\label{Ddef}
\begin{split}
&\mathcal{D}^\ordo_\eps(\bar\theta_1,\dots,\bar\theta_k)\equiv
\sum_{n=0}^\infty \frac{1}{n!}
\int \frac{d\theta_1}{2\pi}\dots  \frac{d\theta_n}{2\pi}
\left(\prod_j \frac{1}{1+e^{\eps(\theta_j)}}\right)\\
&\hspace{5cm}
F^\ordo_{2(n+k),c}(\bar\theta_1+i\pi/2,\dots,\bar\theta_k+i\pi/2,\theta_1,\dots,\theta_n).
\end{split}
\end{equation}
The physical interpretation of this ``dressing'' is simply the
addition of virtual particles which wind around the finite volume
exactly one time. The imaginary shift of $i\pi/2$ corresponds to the
euclidean rotation of the particle world lines. A graphical
interpretation of the  dressing
operation is shown in figure \ref{figgg}. The pseudoenergy
$\eps(\theta)$ is not defined at this stage, it is simply a parameter
of the dressing operation.

\begin{figure}
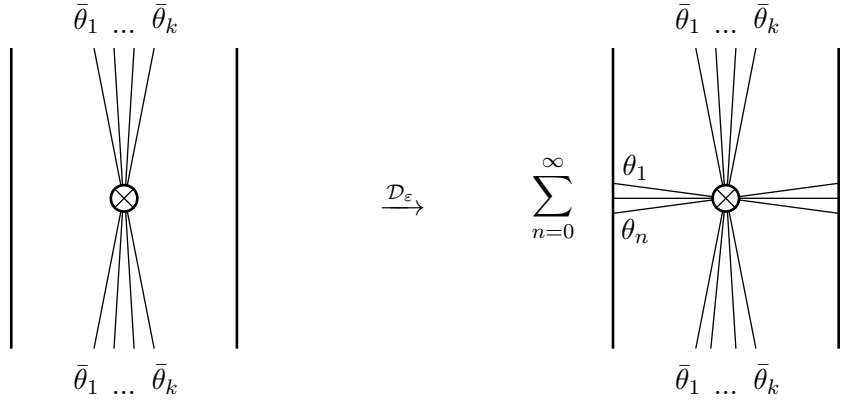

  \centering
\begin{pgfpicture}{0cm}{-1cm}{11cm}{4cm}
\pgfsetlinewidth{1pt}
    \pgfline{\pgfxy(0,0)}{\pgfxy(0,4)}
  \pgfline{\pgfxy(3,0)}{\pgfxy(3,4)}
\pgfcircle[stroke]{\pgfpoint{1.5cm}{2cm}}{5pt}
\pgfsetlinewidth{0.5pt}
\pgfline{\pgfxy(1.365,1.865)}{\pgfxy(1.635,2.135)}
\pgfline{\pgfxy(1.635,1.865)}{\pgfxy(1.365,2.135)}

\pgfline{\pgfxy(1.1,0)}{\pgfxy(1.46,1.84)}
\pgfline{\pgfxy(1.366,0)}{\pgfxy(1.48,1.84)}
\pgfline{\pgfxy(1.6333,0)}{\pgfxy(1.52,1.84)}
\pgfline{\pgfxy(1.9,0)}{\pgfxy(1.54,1.84)}

\pgfline{\pgfxy(1.1,4)}{\pgfxy(1.46,2.16)}
\pgfline{\pgfxy(1.366,4)}{\pgfxy(1.48,2.16)}
\pgfline{\pgfxy(1.6333,4)}{\pgfxy(1.52,2.16)}
\pgfline{\pgfxy(1.9,4)}{\pgfxy(1.54,2.16)}

\pgfputat{\pgfxy(1,-0.6)}{\pgfbox[center,bottom]{$\bar\theta_1$}}
\pgfputat{\pgfxy(1.5,-0.6)}{\pgfbox[center,bottom]{$...$}}
\pgfputat{\pgfxy(2.05,-0.6)}{\pgfbox[center,bottom]{$\bar\theta_k$}}

\pgfputat{\pgfxy(1,4.2)}{\pgfbox[center,bottom]{$\bar\theta_1$}}
\pgfputat{\pgfxy(1.5,4.2)}{\pgfbox[center,bottom]{$...$}}
\pgfputat{\pgfxy(2.05,4.2)}{\pgfbox[center,bottom]{$\bar\theta_k$}}

\pgfputat{\pgfxy(7.2,2)}{\pgfbox[center,center]{$\displaystyle\sum_{n=0}^\infty$}}
\pgfputat{\pgfxy(5.2,2)}{\pgfbox[center,center]{$\displaystyle\xrightarrow{\mathcal{D}_\eps}$}}

%
%

\pgfsetlinewidth{1pt}
\pgfline{\pgfxy(8,0)}{\pgfxy(8,4)}
\pgfline{\pgfxy(11,0)}{\pgfxy(11,4)}
\pgfcircle[stroke]{\pgfpoint{9.5cm}{2cm}}{5pt}
\pgfsetlinewidth{0.5pt}
\pgfline{\pgfxy(9.365,1.865)}{\pgfxy(9.635,2.135)}
\pgfline{\pgfxy(9.635,1.865)}{\pgfxy(9.365,2.135)}

\pgfline{\pgfxy(9.1,0)}{\pgfxy(9.46,1.84)}
\pgfline{\pgfxy(9.3,0)}{\pgfxy(9.48,1.84)}
\pgfline{\pgfxy(9.6333,0)}{\pgfxy(9.52,1.84)}
\pgfline{\pgfxy(9.9,0)}{\pgfxy(9.54,1.84)}

\pgfline{\pgfxy(9.1,4)}{\pgfxy(9.46,2.16)}
\pgfline{\pgfxy(9.366,4)}{\pgfxy(9.48,2.16)}
\pgfline{\pgfxy(9.6333,4)}{\pgfxy(9.52,2.16)}
\pgfline{\pgfxy(9.9,4)}{\pgfxy(9.54,2.16)}

\pgfputat{\pgfxy(9,-0.6)}{\pgfbox[center,bottom]{$\bar\theta_1$}}
\pgfputat{\pgfxy(9.5,-0.6)}{\pgfbox[center,bottom]{$...$}}
\pgfputat{\pgfxy(10.05,-0.6)}{\pgfbox[center,bottom]{$\bar\theta_k$}}

\pgfputat{\pgfxy(9,4.2)}{\pgfbox[center,bottom]{$\bar\theta_1$}}
\pgfputat{\pgfxy(9.5,4.2)}{\pgfbox[center,bottom]{$...$}}
\pgfputat{\pgfxy(10.05,4.2)}{\pgfbox[center,bottom]{$\bar\theta_k$}}

\pgfline{\pgfxy(8,2.2)}{\pgfxy(9.34,2.02)}
\pgfline{\pgfxy(8,2)}{\pgfxy(9.34,2)}
\pgfline{\pgfxy(8,1.8)}{\pgfxy(9.34,1.98)}

\pgfline{\pgfxy(11,2.2)}{\pgfxy(9.66,2.02)}
\pgfline{\pgfxy(11,2)}{\pgfxy(9.66,2)}
\pgfline{\pgfxy(11,1.8)}{\pgfxy(9.66,1.98)}

\pgfputat{\pgfxy(8.3,2.25)}{\pgfbox[center,bottom]{$\theta_1$}}
\pgfputat{\pgfxy(8.3,1.4)}{\pgfbox[center,bottom]{$\theta_n$}}

\end{pgfpicture}
\caption{Graphical interpretation of the ``dressing operation''
  $\mathcal{D}_\eps$. Time runs in the vertical direction. In the
  horizontal direction periodic boundary conditions are understood.}
\label{figgg}
\end{figure}

Define the derivative matrix of the excited state quantization
conditions and its determinant as
\begin{equation*}
\mathcal{K}_{jk}=\frac{\partial \bar Q_j}{\partial \bar\theta_k}
\qquad\qquad
  \bar\rho_K(\bar\theta_1,\dots,\bar\theta_K)=\det \mathcal{K}_{jk}.
\end{equation*}

For a given bipartite partition of the rapidities
\begin{equation*}
  \{\bar\theta_1,\dots,\bar\theta_K\} = \{\bar\theta_+\}\cup \{\bar\theta_-\}
\end{equation*}
\begin{equation*}
  \big|\{\bar\theta_+\}\big|=K-n \quad\text{and}\quad \big|\{\bar\theta_-\}\big|=n
\end{equation*}
define the restricted determinant
\begin{equation*}
  \bar\rho_{K-n}(\{\bar\theta_+\}|\{\bar\theta_-\})=\det \mathcal{K}_+,
\end{equation*}
where $\mathcal{K}_+$ is the sub-matrix of  $\mathcal{K}$ corresponding
to the particles in the set $\{\bar\theta_+\}$.

With these notations, our conjectured expression for the exact finite
volume 
mean values
 reads
\begin{equation}
    \label{full-multiparticle-result}
\begin{split}
&_L\bra{\bar\theta_1,\dots,\bar\theta_K}\ordo\ket{\bar\theta_1,\dots,\bar\theta_K}_L=\\
&\hspace{2cm}\frac{1}{\bar\rho_K(\bar\theta_1,\dots,\bar\theta_K)}
\sum_{\{\bar\theta_+\}\cup \{\bar\theta_-\}}
\mathcal{D}_\eps^\ordo\big(\{\bar\theta_-\}\big)
\bar\rho_{K-n}\big(\{\bar\theta_+\}|\{\bar\theta_-\}\big),
\end{split}
\end{equation}
where it is understood that the pseudoenergy $\eps$ entering the
dressing operation is the solution of the excited state TBA
corresponding to the state $\ket{\bar\theta_1,\dots,\bar\theta_K}_L$.
The above equation is a generalization of the one-particle and two-particle results
 \eqref{LM1p} and \eqref{final2p}.
Also, it can be regarded as the dressed version of the asymptotic formula
\eqref{fftcsa2-result2}. 

\section{Conclusions}

In this paper we studied excited state mean values in finite volume
integrable QFT. Although our ideas are general and should be
applicable to arbitrary models with diagonal scattering, 
for technical
reasons we restricted ourselves to theories with only one particle
species. Moreover, 
we only considered the
sinh-Gordon model because of the simplicity of its excited state TBA equations.

Our main results are \eqref{LM1p} and \eqref{final2p}
for the one-particle and two-particle mean values. In the previous
section we also presented the formula \eqref{full-multiparticle-result}
which is a conjectured generalization to arbitrary higher particle numbers.

Our calculations were based on two important conjectures. First of
all, the basis for the present work is the LeClair-Mussardo integral
series for the ground state mean values. Although this series is generally accepted
to be true, a rigorous derivation from first principles is not yet
known. The LM series was proven in \cite{LM-sajat} using
the finite volume expansion \eqref{fftcsa2-result2}. The present work
shows that a derivation in the other direction is also possible: 
the result \eqref{fftcsa2-result2} follows from the LM series as a
result of the analytic continuation and an appropriate summation
procedure. It is interesting that neither the LM series nor the
expansion \eqref{fftcsa2-result2} have been proven from first
principles yet.

Our second assumption was that there is an analytic continuation
procedure which connects a subset of the excited states
 to the ground state. This
lead to the intermediate result \eqref{LMexc}. Although such an
analytic continuation has been established for certain models, and it
is believed to exist in other models as well, the general existence
has not  been proven. Moreover it is not known if it exists in the
case of the sinh-Gordon model, which was our primary example. 

In order to justify our results a number of checks can be
performed. These include
\begin{itemize}
\item A careful investigation of the IR limit. In the sinh-Gordon
  model we calculated the first
  exponential corrections of order $e^{-mR}$ by an independent method
  and found agreement with the first terms of the IR expansion of
  \eqref{LM1p} and \eqref{final2p}. This calculation will be presented
  elsewhere.
\item Considering the trace of the
  energy-momentum tensor: $\ordo=\Theta=T_{\mu}^\mu$.  In this case
  there is an exact relation
  \begin{equation}
      \label{eq:diagonal_exact_ff}
  \bra{n}\Theta\ket{n}_L=\frac{E_n}{L}+\frac{dE_n}{dL}
  \end{equation}
valid for arbitrary finite volume states. The right hand side of this
equation can be evaluated using the known excited state TBA. On the
other hand, the l.h.s. can also be evaluated using our integral series
and the known form factors of $\Theta$ \cite{leclair_mussardo}. We performed this
comparison 
and found an exact
agreement. This provides an important consistency check of our
calculations. The calculation itself is a simple generalization of the
corresponding calculation presented in \cite{leclair_mussardo} for the
ground state mean value.
\item The investigation of the UV limit. Calculating the $mL\to 0$
  limit of all the integrals in the infinite series and summing up the
  leading contributions it should be possible to recover matrix
  elements calculated in the conformal limit.
We leave this problem to further research. 
\end{itemize}

The technical details of the calculation show that once 
it is established how to represent
individual particles in the excited state TBA then the rule to get the
mean values is to substitute the 
complex rapidities into formulas of the form
\eqref{full-multiparticle-result}. 
One-particle states are typically represented as complex
conjugate pairs of rapidities, possibly of a particle of different
type. In these cases the complex conjugate pairs have to be substituted
into \eqref{full-multiparticle-result};
this way a one-particle mean value will be given by our two-particle formula.
This is in complete accordance with the
findings of \cite{Pozsgay:2008bf} which used essentially the same idea
to obtain the leading exponential correction, the so-called
$\mu$-term. The study of such cases will be pursued in a future publication.

We find it remarkable that our final results for the one-particle and
two-particle mean values (eqs. \eqref{LM1p} and \eqref{final2p}) are
relatively simple and intuitive compared to the cumbersome way in
which they were derived from the starting point \eqref{LMexc}. Also,
it is quite remarkable that the normalization factor in the final
formula is simply the total derivative (or Jacobian) of the
exact quantization condition for the Bethe roots. This calls for an
alternative derivation of the present results.

\vspace{1cm}
{\bf Acknowledgements} 

\bigskip

We would like to thank G\'abor Tak\'acs for numerous discussions, for
suggesting the proof of Theorem 1 and for many
 useful comments on the manuscript. 
Also, the author is grateful to Patrick Dorey and Roberto Tateo for a
stimulating discussion at a conference in Bologna in 2011. 

Most of
this work was carried out while the author was employed by the
NWO/VENI grant 016.119.023 at the University of Amsterdam, the Netherlands.

\addcontentsline{toc}{section}{References}
\bibliography{../../pozsi-general}
\bibliographystyle{utphys}

\end{document}